\newtheorem{thm}{Property}
\newtheorem{lem}{Lemma}
\newtheorem*{prop}{Proposition}
\newtheorem{rem}{Remark}
\def\Xb{{\bf X}}
\def\bb{{\bf b}}
\def\xb{{\bf x}}
\def\yb{{\bf y}}
\def\Pr{{\rm Pr}}
\def\E{{\rm E}}
\def\Var{{\rm Var}}
\def\Bias{{\rm Bias}}
\def\betab{\boldsymbol{\beta}} 
\def\lambdab{\boldsymbol{\lambda}} 
\def\phib{\boldsymbol{\phi}} 
\def\psib{\boldsymbol{\psi}} 
\def\Sigmab{\boldsymbol{\Sigma}} 
\def\1{\mathbbm 1}
\def\bknn{ [bk] }
\def\knn{ [k] }
\def\O#1#2{O\left( \frac{#1}{#2} \right)}
\title{Balanced $k$-Nearest Neighbor Imputation}%
\author{Caren Hasler\thanks{\emph{Address for correspondence}: Caren Hasler, Institute of Statistics, University of Neuch\^atel, 2000 Neuch\^atel, Switzerland.\newline E-mail: \texttt{caren.hasler@unine.ch}}~ and Yves Till\'e\\
\emph{University of Neuch\^atel, Switzerland}}
\begin{document}

\maketitle


\begin{abstract}
In order to overcome the problem of item nonresponse, random imputation methods are often used because they tend to preserve the distribution of the imputed variable. Among the random imputation methods, the random hot-deck has the interesting property of imputing observed values. A new random hot-deck imputation method is proposed. The key innovation of this method is that the selection of donors is viewed as a sampling problem and uses calibration and balanced sampling. This approach makes it possible to select donors such that if the auxiliary variables were imputed, their estimated totals would not change. As a consequence, very accurate and stable totals estimations can be obtained. Moreover, the method is based on a nonparametric procedure. Donors are selected in neighborhoods of recipients. In this way, the missing value of a recipient is replaced with an observed value of a similar unit. This new approach is very flexible and can greatly improve the quality of estimations. Also, this method  is unbiased under very different models and is thus resistant to model misspecification. Finally, the new method makes it possible to introduce edit rules while imputing.
\end{abstract}
Keywords: balanced sampling, calibration, missing data, nearest neighbors, nonresponse, sampling.

\section{Introduction}

Nonresponse is an important problem in surveys. Indeed, the error caused by nonresponse on estimates can be more severe than the error caused by the sampling design. Nonresponse arises when a sampled unit does not respond to one or more items of a survey. One differentiates item nonresponse (a sampled unit does not respond to a particular question) from unit nonresponse (a sampled unit does not respond to the entire survey). Reweighting procedures are often used to deal with unit nonresponse whereas imputation methods are used to treat item nonresponse. Imputation denotes a procedure to replace a missing value with a substituted one.

Imputation methods are classified as either deterministic or random. Deterministic refers to imputation methods that yield the same imputed values if the imputation is repeated. Deterministic imputation methods include ratio imputation, regression imputation, respondent mean imputation, and nearest neighbor imputation. Deterministic imputation methods produce good totals estimations. Nevertheless they often fail to estimate quantiles. Random imputation refers to  methods that yield different imputed values if the imputation is repeated. Random imputation methods include among others multiple imputation presented in \cite{rub:87}, imputation with added residuals considered in \cite{cha:dev:haz:11}, and random $k$-nearest neighbor imputation ($k$NNI) \cite[see among others][]{dah:07}. Unlike deterministic imputation methods, random imputation methods offer the advantage of tending to preserve the distribution of the imputed variable. Nevertheless such methods imply the presence of an additional amount of variance due to the randomness of imputation, which is called \emph{imputation variance}. Many authors have been interested in minimizing imputation variance.  For instance, \cite{kal:kis:81,kal:kis:84} proposed two ways to reduce imputation variance. The first way consists of selecting donors among the respondents without replacement rather than with replacement. The second way consists of first constructing strata using the respondents' values of the variable of interest and selecting proportionate stratified samples to act as donors. To the same end, \cite{che:rao:sit:00} proposed adjustment of the imputed values; \cite{kim:ful:04} and \cite{ful:kim:05} used fractional hot-deck imputation, and \cite{cha:dev:haz:11} introduced a class of balanced random imputation methods
that consists of randomly selecting residuals while satisfying given constraints.

Imputation methods can alternatively be classified as either donor or predicted value. Donor imputation methods replace the missing value of a nonrespondent with the observed value of a respondent. The unit providing the value is called a \emph{donor} and the unit receiving the value is called a \emph{recipient}. A hot-deck method is a donor imputation method where a missing value is replaced with an observed value extracted from the same survey. Such a method is particularly of interest because it imputes feasible and observed values. The reader can, for instance, refer to~\cite{and:lit:10} for a review of hot-deck imputation. In contrast, predicted value imputation methods use function of the respondents values to predict the missing values.


In this paper, a new method of random hot-deck imputation is proposed: the balanced $k$-nearest neighbor imputation method (b$k$NNI). The main feature of this method is that the selection of donors is viewed as a sampling problem and uses calibration and balanced sampling. This makes it possible to select donors such that if the auxiliary variables were imputed, their estimated totals would not change. Moreover, this method is based on a nonparametric procedure. Indeed, it provides donors selected in neighborhoods of recipients. In this way, the gap due to one unit's missing value is filled with a similar unit's observed value.
The novelty of the proposed method lies not only in the fact that the selection of donors uses balanced sampling but also in the fact that this is paired with a nonparametric selection of donors. Considered together in the same procedure, these two features imply a robustness in terms of model misspecification. Moreover, this method uses a methodology that makes it possible to take  edit rules into account while imputing. The proposed method is also particularly effective, produces negligible imputation variance and a quasi-null bias in specified cases.

This paper is organized as follows. In Section~\ref{section:notation}, notation and concepts of nonresponse are reviewed. A methodology for random hot-deck imputation methods is introduced in Section~\ref{section:methodology:hotdeck}. Section~\ref{section:balanced:knn} focusses on the presentation of the b$k$NNI. Section~\ref{section:estimation:imputation:variance} introduces a formula to approximate the conditional imputation variance of the total when the new method is applied. Section~\ref{section:model} describes the models underlain by the method and studies the asymptotic properties of the total estimator. Then, in Section~\ref{section:simulation}, the performance of the new imputation method and the accuracy of the proposed estimator for imputation variance are tested through a simulation study. A short discussion concludes the paper in Section~\ref{section:conclusion}.

\section{Notation and concepts of nonresponse}\label{section:notation}

Consider a finite population $U = \left\{1,2,\ldots,i,\ldots,N\right\}$ and suppose that the target is the variable of interest $\yb = \left( y_1, y_2, \ldots, y_i, \ldots, y_N \right)^\top$. In a first phase, a random sample $S$ of size $n$ is drawn from $U$ with a given sampling design $p\left(\cdot\right)$ where $p(s)= \Pr \left( S = s\right)$ for $s \subset U$. Let $\pi_i = \Pr \left( i \in S \right)$ denote the first order inclusion probability of unit $i$ and let $d_i = 1/ \pi_i$ denote its Horvitz-Thompson weight \citep{hor:tho:52}. If a census is considered, the inclusion probabilities and the design weights are equal to 1. A vector $\xb_i = \left( x_{i1}, x_{i2}, \ldots, x_{iQ} \right)^\top$ of $Q$ auxiliary variables is assumed to be known for each unit $i$ in the sample $S$. In what follows, it is supposed that one of the auxiliary variables is constant. In a second phase, a subset of respondents $S_r \subset S$ is obtained from $S$ with a usually unknown conditional distribution $q\left( S_r | S\right)$. The values $y_i$ of the variable of interest are known for the units of $S_r$ only. Let $S_m = S \setminus S_r$ denote the complement of $S_r$ in $S$, i.e. the subsample of $S$ containing the units with missing data (the nonrespondents). The respective sizes of these subsets are $n_r$ and $n_m$ with $n_r + n_m = n$. For $i \in S$, let $r_i$ be the response indicator variable
\begin{align*}
    r_i = \left\{
            \begin{array}{ll}
              1 & \hbox{if unit $i$ belongs to $S_r$,} \\
              0 & \hbox{otherwise.}
            \end{array}
          \right.
\end{align*}
It is supposed that the units respond independently from each other. For each unit $i \in S$, $r_i$ is therefore generated from a Bernoulli random variable with parameter $\theta_i = \Pr \left( i \in S_r | i \in S \right)$. The parameter $\theta_i$ represents the response propensity of unit $i$ and is usually unknown. Hence, the conditional distribution $q\left( S_r | S\right)$ is a Poisson sampling design, i.e.
\begin{align*}
    q\left( S_r | S\right) = \prod_{i \in S_r} \theta_i  \prod_{i \in S_m} \left( 1 - \theta_i \right).
\end{align*}
Three types of nonresponse mechanisms exist: uniform, ignorable, and non-ignorable. A uniform nonresponse mechanism is a nonresponse mechanism where each unit of the population has the same response propensity, i.e. $\theta_i = \theta$ for each $i \in U$. It is in this case said that the data is missing completely at random (MCAR). An ignorable nonresponse mechanism~\citep{rub:76} is a nonresponse mechanism where the response propensity $\theta_i$ does not depend on the variable of interest once the auxiliary variables have been taken into account. In the case of an ignorable nonresponse mechanism, the data is said to be missing at random (MAR). Finally, a non-ignorable nonresponse mechanism is a nonresponse mechanism where the response propensity $\theta_i$ depends on the variable of interest. The data is in this case said to be not missing at random (NMAR). In a third phase, nonresponse can be corrected through imputation.
Imputed values $y_j^*$, $j \in S_m$ are drawn with a conditional distribution
\begin{align*}
    I \left( y_j^* | S, S_r \right).
\end{align*}

The aim is to estimate the population total
\begin{align*}
Y = \sum_{i \in U} y_i,
\end{align*}
of the variable of interest $\yb$. In the case of complete response, the estimator
\begin{align*}
\widehat{Y} = \sum_{i \in S} d_i y_i, 
\end{align*}
is a design-unbiased estimator of $Y$. In the presence of nonresponse, the previous estimator is intractable and the imputed estimator
\begin{align*}
\widehat{Y}_I = \sum_{i \in S_r} d_i y_i + \sum_{j \in S_m} d_j y_j^*, 
\end{align*}
is used. Moreover, consider
\begin{align*}
    \Xb             &= \sum_{i \in U} \xb_i, \\
    \widehat{\Xb}   &= \sum_{i \in S} d_i \xb_i, \\
    \widehat{\Xb}_I &= \sum_{i \in S_r} d_i \xb_i + \sum_{j \in S_m} d_j \xb_j^*,
\end{align*}
where $\xb_j^*$ represents the imputed value we would have obtained for $j \in S_m$ if we were to impute the auxiliary variables. For instance, suppose hot-deck imputation is used. For each nonrespondent $j \in S_m$, a donor $i \in S_r$ is chosen. The imputed value $y_j^*$ for unit $j \in S_m$ is therefore the donor's observed value $y_i$. In this case, the imputed value $\xb_j^*$ is the observed value $\xb_i$ of the same donor as the one used to obtain $y_j^*$.

In the presented framework, the bias and variance of an imputed estimator $\widehat{\theta}_I$ (for a total or another statistic $\theta$) are given by
\begin{align}
    \Bias \left( \widehat{\theta}_I \right)     &= \E_p \E_q \E_I \left( \widehat{\theta}_I - \theta \right),\nonumber\\
    \Var \left( \widehat{\theta}_I \right)      &= \Var_p \E_q \E_I \left( \widehat{\theta}_I \right)
                                                    + \E_p \Var_q \E_I \left( \widehat{\theta}_I \right) + \E_p \E_q \Var_I \left( \widehat{\theta}_I \right),
                                                    \label{total:variance}
\end{align}
where the subscripts $p$, $q$ and $I$ indicate respectively the expectations and variances with regards to the sampling mechanism, with regards to the nonresponse mechanism, and with regards to the imputation mechanism. The first term in Expression~\eqref{total:variance} represents the sampling variance, the second term represents the nonresponse variance and the last term represents the imputation variance.

\section{Methodology for random hot-deck imputation methods}\label{section:methodology:hotdeck}

In this section, we propose an original formalization for random hot-deck donor imputation. The proposed method is presented by means of this formalization, but this last one can be used for any random hot-deck method. Random hot-deck imputation consists of replacing a missing value with an observed value extracted from the same survey. For each nonrespondent, a donor is hence randomly chosen among the respondents. Consequently, a random hot-deck imputation can be achieved through the realization of a random matrix $\phib = \left(\phi_{ij}\right)$, $(i,j) \in S_r \times S_m$ such that
\begin{align*}
\phi_{ij} &=
\left\{
\begin{array}{ll}
1 & \hbox{if nonrespondent $j$ is imputed by respondent $i$,} \nonumber\\
0 & \hbox{otherwise,}
\end{array}
\right.
\end{align*}
which can be rewritten
\begin{equation}\label{matrix:imputation:definition}
    \phi_{ij} = \mathbbm{1}_{y_j^* = y_i}.
\end{equation}
As exactly one donor must be selected for each nonrespondent, $\phib$ must satisfy
\begin{equation}\label{equation:constraint:sum:imputation}
    \sum_{i \in S_r} \phi_{ij} = 1, \quad \mbox{for each }j \in S_m.
\end{equation}
It is here considered that a respondent can be used to impute several nonrespondents. Therefore, no conditions are imposed on
$$
    \sum_{j \in S_m} \phi_{ij},
$$
for $i \in S_r$. Taking the conditional expectation of Equation~\eqref{matrix:imputation:definition} generates a matrix of imputation probabilities $\psib = \left(\psi_{ij}\right)$, $(i,j) \in S_r \times S_m$
\begin{equation}\label{equation:link:psi:phi}
\psi_{ij} = \E_I\left(\phi_{ij}\right) = \E_I\left( \mathbbm{1}_{y_j^* = y_i}\right) =  \Pr \left( y_j^* = y_i | S, S_r\right).
\end{equation}
By definition, $\psib$ satisfies
\begin{align}
    &\sum_{i \in S_r} \psi_{ij} = 1 \quad \mbox{for each } j \in S_m,   \label{equation:constraint:P:sum1}\\
    & \psi_{ij} \geq 0  \quad \mbox{for each } (i,j) \in S_r \times S_m. \label{equation:constraint:P:coeff01}
\end{align}

The considered methodology for random hot-deck donor imputation is therefore operated in two stages. In the first stage, the matrix of imputation probabilities $\psib$ is defined. Then, in the second stage, a realization of the matrix of imputation $\phib$ is carried out. This methodology has, among other things, the interesting property to make it possible to implement edit rules in the imputation process to correct the data at the record level. Indeed, suppose that the value of the variable of interest $y_i$ of a respondent $i \in S_r$ is, for some reason, inconsistent with a nonrespondent $j \in S_m$. To take this inconsistency into account, it is sufficient to set a zero in coefficient $\psi_{ij}$ of the matrix of imputation probabilities $\psib$. In this way, indeed, unit $i \in S_r$ is removed from the set of possible donors for nonrespondent $j \in S_m$. Note that it might then be required to rescale the column of matrix $\psib$ corresponding to nonrespondent $j$ in order to ensure that this one still sums up to 1. Two examples of application of this methodology are provided below.

\subsection*{Example 1: simple random imputation method (with replacement)}\label{section:srswr}

Simple random imputation consists of selecting, for each nonrespondent, a donor among the respondents. The donors are selected randomly, with replacement (a donor can be used several times), and with equal probabilities. This results in the matrix of imputation probabilities $\psib^{[srs]} = \left( \psi_{ij}^{[srs]} \right)$, $(i,j) \in S_r \times S_m$ with
\begin{equation}\label{equation:psi:srs}
\psi_{ij}^{srs} = \frac{1}{n_r}.
\end{equation}
Notation $[srs]$ in superscript of the matrix $\psib$ means that the matrix linked to simple random imputation is considered.

\subsection*{Example 2: random $k$-nearest neighbor imputation method}\label{section:random:knn}

The \emph{$k$-nearest neighbors} of a nonrespondent unit $j \in S_m$ are here defined as its $k$ most similar respondents units $i \in S_r$, i.e.
\begin{align*}
    \mbox{knn}(j) = \left\{ i \in S_r | \mbox{rank} \left( d(i,j) \right) \leq k \right\},
\end{align*}
where $d(\cdot,\cdot)$ is for instance the Mahalanobis distance defined through the nonconstant auxiliary variables
\begin{align*}
 d( i, j ) = \left\{ \left(\xb_i-\xb_j\right)^\top \Sigmab^{-1} \left(\xb_i-\xb_j\right) \right\}^{1/2},
\end{align*}
where $\Sigmab$ is the variance-covariance matrix of the nonconstant auxiliary variables. If the values of the auxiliary variables are known on the sample level only, $\Sigmab$ must be estimated.

The random $k$-nearest neighbor imputation method ($k$NNI) consists of replacing the missing value of a unit $j \in S_m$ with the value of one of its $k$-nearest neighbors. The donors are randomly selected with equal probabilities. This results in the matrix of imputation probabilities $\psib^{\knn} = \left( \psi_{ij}^{\knn} \right)$, $(i,j) \in S_r \times S_m$ with
\begin{equation}\label{equation:psi:knn}
\psi_{ij}^{\knn} =
    \left\{
        \begin{array}{ll}
        \frac{1}{k} & \hbox{if $i$ belongs to the $k$ nearest neighbors of $j$,}\\
                0   & \hbox{otherwise.}
        \end{array}
    \right.
\end{equation}
Notation $\knn$ in superscript of the matrix $\psib$ means that the matrix linked to $k$NNI is considered. The matrix of imputation probabilities related to the $k$NNI is a matrix containing exactly $k$ non-null coefficients in each column and all these non-null coefficients are equal to $1/k$.
This particular matrix of imputation probabilities is the starting point of the method proposed in this paper.

\section{Balanced \texorpdfstring{$k$}{k}-nearest neighbor imputation method}\label{section:balanced:knn}

\subsection{Aim of the method}\label{subsection:balanced:knn}

Random imputation methods show the nice feature that they tend to preserve the distribution of the variable being imputed. This is often not the case for deterministic imputation methods. In return, this randomness implies the undesirable presence of imputation variance. Hence, a random imputation method that makes it possible to keep the imputation variance relatively small is of great interest. Moreover, donor imputation methods, such as hot-deck, have the advantage of imputing feasible and observed values.

An imputation method can be based on either a parametric procedure or a nonparametric procedure. The hypotheses underlain by parametric procedures are strong. In this case, one can assume that the distribution of the variable of interest is known, but the parameters of this distribution must be estimated. Practically, this kind of hypotheses is satisfied in only a few cases.
On the other hand, hypotheses underlain by nonparametric procedures are much weaker. Hence, such procedures are usually much more flexible than parametric procedures.
An imputation method based on a nonparametric procedure therefore makes it possible to handle a wider range of data without violating its underlain hypotheses than an imputation method based on a parametric procedure.

The proposed method shows the nice features stated above. First, it is a random hot-deck imputation method. It therefore tends to preserve the distribution of the variable being imputed and it imputes observed and feasible values. Moreover, even though it is random, the proposed method makes it possible to control the imputation variance of the total estimator. Indeed, the imputation process conserves the estimator of the total of the auxiliary variables. If the auxiliary variables suffered from nonresponse, their imputed total estimator would match their total estimator under complete response. The imputation mechanism relative to b$k$NNI is therefore such that conditionally on the sampling mechanism and on the nonresponse mechanism
\begin{equation}
     \widehat{\Xb}_I  =  \widehat{\Xb}.       \label{equation:imputation:error:variance}
\end{equation}
If a linear model fits well the relation between the variable of interest and the auxiliary variables, the imputation variance of the total estimator can in this way be reduced. As a result, the proposed method makes it possible to impute randomly while keeping imputation variance of the total estimator relatively small. Then, the proposed method is based on a nonparametric procedure. Indeed, the donors are chosen in neighborhoods of recipients. For each nonrespondent, a donor is randomly selected among its $k$-nearest neighbors. As nonparametric, this procedure is very adaptive and makes it possible to impute values that are close to the unobserved missing value for a wide range of data. Moreover and as stated above, the imputation process conserves the estimator of the total of the auxiliary variables. If a linear model fits well the relation between the variable of interest and the auxiliary variables, this property implies that the imputed total estimator for the variable being imputed is close to the total estimator that would have been obtained under complete response (Horvitz-Thompson estimator). As this last one is unbiased, the obtained imputed total estimator is nearly unbiased in the case specified above. Details regarding the properties of the total estimator are given in Section~\ref{section:model}. Last, the proposed method used the methodology proposed in Section~\ref{section:methodology:hotdeck}. This makes it possible for the user to take into account edit rules directly while imputing as explained in Section~\ref{section:methodology:hotdeck}.

In what follows, it is explained how donors can be obtained. Notation $\bknn$ in the superscript of the matrices $\psib$ and $\phib$ means that the matrices linked to the b$k$NNI are considered. The method proceeds in two steps. The first step consists of obtaining the matrix of imputation probabilities $\psib^{\bknn}$ whereas the second step consists of generating a realization of the matrix of imputation $\phib^{\bknn}$. The aim is that the imputation mechanism satisfies Equation~\eqref{equation:imputation:error:variance}. With this aim, the matrix of imputation probabilities $\psib^{\bknn}$ is, in the first step, constructed such that Equation
\begin{equation}\label{equation:constraint:psi:0}
    \E_I\left(\widehat{\Xb}_I\right) = \widehat{\Xb}
\end{equation}
is satisfied and the matrix of imputation probabilities $\phib^{\bknn}$ is, in the second step, generated such that Equation
\begin{equation}\label{equation:constraint:phi:0}
    \widehat{\Xb}_I = \E_I\left(\widehat{\Xb}_I\right) .
\end{equation}
is satisfied. These two steps are presented in Section~\ref{subsection:imputation:probabilities} and in Section~\ref{subsection:selection:donors} respectively. Equation~\eqref{equation:constraint:psi:0} together with Equation~\eqref{equation:constraint:phi:0} lead to Equation~\eqref{equation:imputation:error:variance}. As a result, the imputation mechanism obtained through the two steps described above satisfies Equation~\eqref{equation:imputation:error:variance}.

\subsection{Calibration}\label{section:calibration}

The aim of this Section is to briefly describe calibration \citep{dev:sar:92} which is the main tool used in Section~\ref{subsection:imputation:probabilities} to obtain the matrix of imputation probabilities $\psib^{\bknn}$. Suppose a vector of $Q$ auxiliary variables $\xb_i = \left( x_{i1}, x_{i2}, \ldots, x_{iQ}\right)^\top$ is known for each unit of the population $U$. The aim of calibration is to find calibration weights $w_i$ for $i \in S$ as close as possible to the initial design weights $d_i = 1 / \pi_i$ (in an average sense for a given distance) while respecting the calibration equation
\begin{align*}
    \sum_{i \in S} w_i \xb_i = \sum_{i \in U} \xb_i = \Xb.
\end{align*}
Several distance functions are proposed in~\cite{dev:sar:92} as a means of measuring the distance between the initial design weights $d_i = 1 / \pi_i$ and the final weights $w_i$. Each distance provides a particular form for the final weights $w_i$. The raking method is the calibration obtained considering the distance function $G(\cdot,\cdot)$ given by
\begin{align*}
    G(w_i,d_i) = w_i \log \left( \frac{w_i}{d_i} \right) - w_i + d_i.
\end{align*}
This leads to the final weights $ w_i = d_i \exp \left( \lambdab^\top \xb_i   \right) $ where the vector $\lambdab = \left( \lambda_1, \lambda_2, \ldots, \lambda_Q  \right)^\top$ is the solution to the calibration equation
\begin{align*}
    \sum_{i \in S} d_i \exp \left( \lambdab^\top \xb_i   \right) \xb_i = \sum_{i \in U} \xb_i.
\end{align*}
From~\cite{dev:sar:92}, if a solution $\lambdab$ to this calibration problem exists, then it is unique.

\subsection{Obtaining the matrix of imputation probabilities \texorpdfstring{$\psib^{\bknn}$}{}}\label{subsection:imputation:probabilities}

In this Section, a procedure to obtain the matrix of imputation probabilities $\psib^{\bknn} $ is presented. This matrix must satisfy
\begin{equation}
    \psi_{ij}^{\bknn} \neq 0    \quad   \mbox{only if}  \quad    i \in \mbox{knn}(j), \label{equation:neighborhood:principle}
\end{equation}
because it is imposed that donors are chosen in neighborhoods of recipients. Moreover, as stated in Section~\ref{section:methodology:hotdeck}, a matrix of imputation must satisfy equations~\eqref{equation:constraint:P:sum1} and \eqref{equation:constraint:P:coeff01}. Finally, as stated in Section~\ref{subsection:balanced:knn}, in order to select donors such that Equation~\eqref{equation:imputation:error:variance} is satisfied, it is imposed on $\psib^{\bknn}$ to satisfy Equation~\eqref{equation:constraint:psi:0}, i.e
\begin{equation}\label{equation:constraint:psi}
    \E_I\left(\widehat{\Xb}_I\right) = \widehat{\Xb}.
\end{equation}
This constraint states that conditionally on the sampling mechanism and the nonresponse mechanism, the imputation mechanism provides an unbiased imputed total estimator for the auxiliary variables.
Equation~\eqref{equation:constraint:psi} is equivalent to
\begin{align}\label{equation:constraint:psi:2}
    \sum_{ j \in S_m } d_j \sum_{i \in S_r} \psi_{ij}^{\bknn} \xb_i     &=     \sum_{j \in S_m} d_j \xb_j.
\end{align}
Hence, the matrix of imputation probabilities $\psib^{\bknn} = \left( \psi_{ij}^{\bknn} \right) $ must satisfy simultaneously
\begin{align}
    \psi_{ij}^{\bknn}                   &\neq 0    \quad   \mbox{only if}  \quad    i \in \mbox{knn}(j),    \label{equation:neighborhood:principle:2}\\
    \sum_{i \in S_r} \psi_{ij}^{\bknn}  &= 1 \quad \mbox{for each } j \in S_m,                     \label{equation:constraint:P:sum1:2}\\
    \psi_{ij}^{\bknn}            &\geq 0 \quad \mbox{for each } (i,j) \in S_r \times S_m,    \label{equation:constraint:P:coeff01:2}\\
    \sum_{ j \in S_m } d_j \sum_{i \in S_r} \psi_{ij}^{\bknn} \xb_i     &=     \sum_{j \in S_m} d_j \xb_j.       \label{equation:constraint:psi:3}
\end{align}

The existence of a matrix of imputation probabilities $\psib^{\bknn}$ satisfying the above conditions and the choice of the number of nearest neighbors $k$ are discussed in Section~\ref{section:choice:k}. Algorithm~\ref{algorithm:matrix:imp:prob} presents a procedure to obtain the matrix of imputation probabilities $\psib^{\bknn}$. The main idea of Algorithm~\ref{algorithm:matrix:imp:prob} is to find a matrix of imputation probabilities $\psib^{\bknn}$ close to the matrix of imputation probabilities relative to the $k$NNI $\psib^{\knn}$, while satisfying equations~\eqref{equation:neighborhood:principle:2}, \eqref{equation:constraint:P:sum1:2}, \eqref{equation:constraint:P:coeff01:2}, and \eqref{equation:constraint:psi:3}. This Algorithm initializes with the matrix $\psib^{\knn}$. Throughout all the steps, a null coefficient remains null which implies that equation~\eqref{equation:neighborhood:principle:2} is satisfied. Thereafter, calibrations and normalizations are alternated. Calibrations provide matrices $\psib(2\ell)$ for $\ell \geq 1$ satisfying equations~\eqref{equation:constraint:P:coeff01:2} and \eqref{equation:constraint:psi:3}. However, these matrices $\psib(2\ell)$ are not matrices of imputation probabilities because they do not satisfy~\eqref{equation:constraint:P:sum1:2}. Normalizations provide matrices $\psib(2\ell+1)$ for $\ell \geq 1$ satisfying equations~\eqref{equation:constraint:P:sum1:2} and \eqref{equation:constraint:P:coeff01:2} but not necessarily satisfying equation~\eqref{equation:constraint:psi:3}. Iterations stop when the matrix $\psib(2\ell+1)$ obtained by normalization approximately satisfies equation~\eqref{equation:constraint:psi:3}. The matrix $\psib^{\bknn}$ is the last $\psib(2\ell+1)$ considered. Hence, $\psib^{\bknn}$ satisfies equations~\eqref{equation:neighborhood:principle:2}, \eqref{equation:constraint:P:sum1:2}, \eqref{equation:constraint:P:coeff01:2}, and \eqref{equation:constraint:psi:3} simultaneously.

\begin{algorithm}\small
\caption{Procedure to obtain the matrix of imputation probabilities $\psib^{\bknn}$}
\label{algorithm:matrix:imp:prob}
\begin{itemize}
\item   \emph{Initialization}\\
        Set $\psib(1) = \psib^{\knn}$, the matrix of imputation probabilities relative to the $k$NNI defined in Expression~\eqref{equation:psi:knn}.
\item   \emph{Iterations}\\
        Repeat for $\ell = 1,2,\ldots$
        \begin{itemize}
            \item   \emph{Calibration}\\
                    For $i \in S_r$, consider the initial weights $\tilde{d}_i = \sum_{j \in S_m} d_j \psi(2\ell - 1)_{ij}$ and obtain the calibrated weights $w_i =  \tilde{d}_i  \exp\left( \lambdab^\top \xb_i \right)$ by means of the raking method. The calibration equation is
                    \begin{align*}
                        \sum_{i \in S_r} w_i \xb_i = \sum_{j \in S_m} d_j \xb_j.
                    \end{align*}
            \item   Let $\psib(2\ell)$ be the matrix defined as $\psi(2\ell)_{ij} = \psi(2\ell-1)_{ij} \exp\left( \lambdab^\top \xb_i \right)$.
            \item   \emph{Normalization}\\
                    Let $\psib(2\ell+1)$ be the matrix defined as $\psi(2\ell+1)_{ij} = \frac{ \psi(2\ell)_{ij} }{ \sum_{l \in S_r} \psi(2\ell)_{lj} } $.
            \item   \emph{Stop criterion}\\
                    If
                    \begin{align*}
                        \max_{1 \leq q \leq Q}
                        \left|
                            \frac   { \sum_{ j \in S_m } \sum_{i \in S_r} d_j \psi(2\ell+1)_{ij} x_{iq}  -   \sum_{j \in S_m} d_j x_{jq} }
                                    {   \sum_{j \in S_m} d_j x_{jq} }
                        \right| \leq tol
                    \end{align*}
                    for a small fixed error tolerance $tol$, then
                    \begin{itemize}
                        \item   Set $\psib^{\bknn} = \psib(2\ell+1)$,
                        \item   Stop.
                    \end{itemize}
        \end{itemize}
 \end{itemize}
 \end{algorithm}

\subsection{Choice of \texorpdfstring{$k$}{k} and existence of \texorpdfstring{$\psib^{\bknn}$}{the matrix of imputation probabilities}}\label{section:choice:k}

In this Section, the choice of the number of nearest neighbors $k$ and existence of a matrix of imputation probabilities $\psib^{\bknn}$ having the required properties are discussed.

The number of nearest neighbors $k$ is relevant when constructing the matrix $\psib^{\bknn}$ in Section~\ref{subsection:imputation:probabilities}. Indeed, this matrix must contain at most $k \cdot n_m$ non-null coefficients as confirmed by Equation~\eqref{equation:neighborhood:principle:2}. Moreover, the coefficients of this matrix must satisfy Equation~\eqref{equation:constraint:P:sum1:2} and Equation~\eqref{equation:constraint:psi:3}. These two equations together form a system of $n_m + q$ constraints. Hence, when constructing the matrix of imputation probabilities $\psib^{\bknn}$, the aim is to find $k \cdot n_m$ unknown coefficients that satisfy $n_m + q$ linear constraints. As a result, a necessary condition to find a matrix $\psib^{\bknn}$ satisfying Equations~\eqref{equation:neighborhood:principle:2},~\eqref{equation:constraint:P:sum1:2} and~\eqref{equation:constraint:psi:3} is
\begin{equation}\label{equation:constraint:k}
    k \geq \frac{n_m + q}{n_m}.
\end{equation}
However, this condition is not sufficient to ensure that a solution $\psib^{\bknn}$ exists and elements external to the choice of $k$ have an effect. Such an element is the configuration of the nonrespondents. Indeed, Equation~\eqref{equation:constraint:P:coeff01:2} and Equation~\eqref{equation:constraint:P:sum1:2} imply together that all the coefficients $\psi_{ij}^{\bknn}$ must lie between 0 and 1. Consider the extreme case in which all the units with the largest values of one auxiliary variable are nonrespondents. For this auxiliary variable, and as all the coefficients $\psi_{ij}^{\bknn}$ must lie between 0 and 1, the small values of respondents do not make it possible to compensate the large missing values of the nonrespondents. Hence, a matrix $\psib^{\bknn}$ with the properties stated above might not exist, whatever the value of $k$ is. Condition~\eqref{equation:constraint:k} is therefore necessary but not sufficient for a solution to exist. Note that, if a solution exists, then it is unique. Indeed, the solution to the calibration problem in Algorithm~\ref{algorithm:matrix:imp:prob} is unique (see last sentence of Section~\ref{section:calibration}).

The choice of $k$ can have an impact on the bias of the total estimator. Indeed, the smaller $k$ is, the closer the values of the auxiliary variables are in a neighborhood of $k$ units and therefore the closer the values of the variable of interest $y$ tend to be in the same neighborhood. As a result and from Property~\ref{thm:neighborhood:principle} below, under the hypothesis that the values of variables $y$ are similar in a neighborhood,  the smaller $k$ is, the smaller the bias of the total estimator tends to be. Hence, with the aim of controlling the bias, $k$ should be chosen as small as possible. However, note that this choice has no impact on the bias of the total estimator when the hypotheses of Property~\ref{thm:linear:model} or those of Property~\ref{thm:response:model} are satisfied. Indeed, the bias of the total estimator are in these cases null whatever the value of $k$ is. Moreover, the larger $k$ is, the larger the imputation variance of the total is.

For these reasons, we suggest to choose the smallest value $k$ for which a solution for the computation of matrix $\psib^{\bknn}$ exists. Thus, we propose to fix the value of $k$ as follows. First, choose a value of $k$ that satisfies Equation~\eqref{equation:constraint:k} and that is relatively small. Then, apply Algorithm~\ref{algorithm:matrix:imp:prob} and see if this one finds a solution, i.e. returns a matrix $\psib^{\bknn}$ satisfying the required conditions. If this is the case, the user can then apply Algorithm~\ref{algorithm:matrix:imp} in order to select the donors. Otherwise, we propose gradually increasing the value of $k$ and repeating this procedure until a solution is found.

\subsection{Stratified balanced sampling}

The aim of this Section is to briefly describe stratified balanced sampling. This represents the main tool used in Section~\ref{subsection:selection:donors} to obtain the matrix of imputation $\phib^{\bknn}$. Suppose a vector of $Q$ auxiliary variables $\xb_i = \left( x_{i1}, x_{i2}, \ldots, x_{iQ}\right)^\top$ is known for each unit of the population $U$. A sampling design $p(\cdot)$ is said to be balanced on these auxiliary variables if
\begin{align*}
    \sum_{i \in s} \frac{ \xb_i }{ \pi_i } = \sum_{i \in U} \xb_i,
\end{align*}
for each $s \subset U$ with $p(s)>0$. This last equation can be rewritten
\begin{align*}
    \sum_{i \in U} \frac{ \xb_i }{ \pi_i } \mathbbm{1}_{i \in s} = \sum_{i \in U} \xb_i,
\end{align*}
where $\mathbbm{1}_{i \in s}$ is the indicator function which takes value 1 if unit $i$ belongs to $s$ and 0 otherwise. The cube method \citep{dev:til:04a} is a method for balanced sampling. It allows a balanced sample to be selected while satisfying the inclusion probabilities in the sense that $\E \left( \mathbbm{1}_{i \in S} \right) = \pi_i $ for each $i \in U$. Suppose moreover that the population $U$ is partitioned into $H$ nonoverlapping strata $U_1, U_2,\ldots,U_H$. A stratified balanced sampling design is a sampling design balanced in each stratum, i.e.
\begin{equation}\label{equation:stratified:balanced:sampling:1}
    \sum_{i \in U_h} \frac{ \xb_i }{ \pi_i } \mathbbm{1}_{i \in s} = \sum_{i \in U_h} \xb_i     \quad   \mbox{for each } 1 \leq h \leq H,
\end{equation}
for each $s \subset U$ with $p(s) > 0$. \cite{cha:09} and~\cite{has:til:14} proposed methods for stratified balanced sampling which are based on the cube method. The algorithm proposed in~\cite{has:til:14} is particularly fast and is applicable when the number of strata is very large. The samples selected with these methods are approximately balanced in each stratum and approximately balanced in the overall population while satisfying the inclusion probabilities in the sense that
\begin{equation}\label{equation:stratified:balanced:sampling:2}
\E \left( \mathbbm{1}_{i \in S} \right) = \pi_i \quad \mbox{for each } i \in U.
\end{equation}
\begin{rem}\label{rem:sum:pi:sbs}
If the sum of the inclusion probabilities
\begin{align*}
    n_h = \sum_{i \in U_h} \pi_i
\end{align*}
is integer in each stratum $U_h$, $1 \leq h \leq H$, the algorithm proposed in~\cite{has:til:14} selects exactly $n_h$ units in each stratum $U_h$, $1 \leq h \leq H$.
\end{rem}

\subsection{Selection of the donors}\label{subsection:selection:donors}

In this Section, a procedure to obtain the matrix of imputation $\phib^{\bknn} $ from the matrix of imputation probabilities $\psib^{\bknn}$ is presented. The main idea of this procedure is to select a donor among the respondents for each nonrespondent while satisfying constraints. The key feature of the method is that the selection of donors is viewed as a sampling problem and the constraints are satisfied through stratified balanced sampling, where only one unit is selected in each stratum.

As stated in Section~\ref{subsection:balanced:knn}, in order to select donors such that Equation~\eqref{equation:imputation:error:variance} is satisfied, it is imposed on $\phib^{\bknn}$ to satisfy Equation~\eqref{equation:constraint:phi:0}, i.e
\begin{equation}
    \widehat{\Xb}_I = \E_I\left(\widehat{\Xb}_I\right) \label{equation:constraint:phi}.
\end{equation}
This constraint states that the imputation variance of the auxiliary variables cancels out. A sufficient condition for Equation~\eqref{equation:constraint:phi} to hold is
\begin{align*}
    \sum_{i \in S_r} d_j \phi_{ij}^{\bknn} \xb_i   & = \sum_{i \in S_r} d_j \psi_{ij}^{\bknn} \xb_i \quad \mbox{for each } j \in S_m,
\end{align*}
which can be rewritten
\begin{equation} \label{equation:constraint:phi:2}
    \sum_{i \in S_r}\frac{ d_j \psi_{ij}^{\bknn} \xb_i  }{\psi_{ij}^{\bknn}} \phi_{ij}^{\bknn}  =   \sum_{i \in S_r}  d_j \psi_{ij}^{\bknn} \xb_i   \quad   \mbox{for each } j \in S_m.
\end{equation}
Moreover, matrices $\psib^{\bknn}$ and $\phib^{\bknn} $ must be linked by Equation~\eqref{equation:link:psi:phi} , i.e.
\begin{equation}\label{equation:link:psi:phi:1}
\psi^{\bknn}_{ij} = \E_I\left(\phi^{\bknn}_{ij}\right)\mbox{for each } (i,j) \in S_r \times S_m.
\end{equation}
The aim is therefore to generate matrix $\phib^{\bknn}$ such that
\begin{align}
    \sum_{i \in S_r}\frac{ d_j \psi_{ij}^{\bknn} \xb_i  }{\psi_{ij}^{\bknn}} \phi_{ij}^{\bknn}  &=   \sum_{i \in S_r}  d_j \psi_{ij}^{\bknn} \xb_i   \quad   \mbox{for each } j \in S_m,             \label{equation:constraint:phi:3}\\
    \E_I \left( \phi_{ij}^{\bknn} \right) &= \psi_{ij}^{\bknn} \quad \mbox{for each }  (i,j) \in S_r \times S_m   \label{equation:link:psi:phi:2}.
\end{align}
A matrix $\phib^{\bknn}$ satisfying exactly the above equations often does not exist. However, when generating this matrix with the procedure presented below, the constraints are relaxed until a solution is found. See Remark~\ref{rem:2}. A solution to this problem was proposed in~\cite{cha:dev:haz:11}. A slight modification of that is presented here. Consider the population of cells
\begin{equation}
    \dot{U} = \left\{ (i,j) | i \in S_r, j \in S_m\right\}.
\end{equation}
This population is partitioned into $n_m$ strata $\dot{U}_j$, $j \in S_m$ where $\dot{U}_j = \left\{ (i,j) | i \in S_r\right\}$. Each stratum corresponds to one nonrespondent. Then, exactly one unit will be selected in each stratum, providing in this way exactly one donor for each nonrespondent. For each unit $(i,j) \in \dot{U}$, consider the initial inclusion probability
\begin{align*}
    \dot{\pi}_{(i,j)} =  \psi_{ij}^{\bknn},
\end{align*}
and the auxiliary variables
\begin{align*}
    \dot{\xb}_{(i,j)} = d_j \psi_{ij}^{\bknn} \xb_i.
\end{align*}
Moreover, as $\phi_{ij}^{\bknn}$ is 1 if respondent $i$ is the donor for nonrespondent $j$ and 0 otherwise, consider
\begin{align*}
    \mathbbm{1}_{(i,j) \in S} = \phi_{ij}^{\bknn}.
\end{align*}
It means that respondent $i \in S_r$ is used to impute the missing value of nonrespondent $j \in S_m$ if unit $(i,j)$ is selected in the sample. The problem formed by equations~\eqref{equation:constraint:phi:3} and \eqref{equation:link:psi:phi:2} can be rewritten as follows
\begin{align}
    \sum_{(i,j) \in \dot{U}_j} \frac{ \dot{\xb}_{(i,j)}  }{ \dot{\pi}_{(i,j)}} \mathbbm{1}_{(i,j) \in S}  &=   \sum_{(i,j) \in \dot{U}_j}  \dot{\xb}_{(i,j)}   \quad   \mbox{for each } j \in S_m  ,             \label{equation:constraint:phi:4}\\
    \E_I \left( \mathbbm{1}_{(i,j) \in S} \right) &= \dot{\pi}_{(i,j)} \quad \mbox{for each }  (i,j) \in \dot{U}   \label{equation:link:psi:phi:3}.
\end{align}
This is a typical problem of stratified balanced sampling where only one unit is selected in each stratum because each respondent receives exactly one value. Equations~\eqref{equation:constraint:phi:4} and \eqref{equation:link:psi:phi:3} correspond respectively to equations~\eqref{equation:stratified:balanced:sampling:1} and \eqref{equation:stratified:balanced:sampling:2}. The procedure to obtain matrix $\phib^{\bknn}$ therefore uses stratified balanced sampling and is presented in Algorithm~\ref{algorithm:matrix:imp}. The first step of the algorithm consists of selecting a stratified balanced sample in the cell population $\dot{U} = \left\{ (i,j) | i \in S_r, j \in S_m\right\}$ such that equations~\eqref{equation:constraint:phi:4} and \eqref{equation:link:psi:phi:3} are satisfied. In a second and last step, matrix $\phib^{\bknn}$ is obtained by setting $\phi^{\bknn}_{ij}=1$ if the cell $(i,j)$ has been selected in the sample and 0 otherwise.

\begin{rem}
    For each $j \in S_m$, the following equation is satisfied
    \begin{align*}
        \sum_{(i,j) \in \dot{U}_j} \dot{\pi}_{(i,j)} = \sum_{i \in S_r}  \psi_{ij}^{\bknn} = 1.
    \end{align*}
     This means that the sum of the inclusion probabilities is equal to 1 in each stratum considered in the stratified balanced sampling problem of Algorithm~\ref{algorithm:matrix:imp}. Therefore, as the procedure for balanced stratified sampling proposed in~\cite{has:til:14} is applied in Algorithm~\ref{algorithm:matrix:imp} and from Remark~\ref{rem:sum:pi:sbs}, exactly 1 unit is selected in each stratum, i.e
    \begin{align*}
        \sum_{ (i,j) \in \dot{U}_j } \mathbbm{1}_{(i,j) \in S} = 1 \quad    \mbox{for each } j \in S_m.
    \end{align*}
    Moreover, as
    \begin{align*}
        \sum_{ (i,j) \in \dot{U}_j } \mathbbm{1}_{(i,j) \in S} = \sum_{i \in S_r} \phi^{\bknn}_{ij}\quad \mbox{for each }j \in S_m,
    \end{align*}
    the matrix $\phib^{\bknn}$ satisfies Equation~\eqref{equation:constraint:sum:imputation}. This means that
    Algorithm~\ref{algorithm:matrix:imp} provides exactly one donor for each nonrespondent $j \in S_m$.
\end{rem}

\begin{rem}\label{rem:2}
It is often not possible to select samples such that the balancing equations are exactly satisfied. As a result, Algorithm~\ref{algorithm:matrix:imp} often generates a matrix $\phib^{\bknn}$ such that Equation~\eqref{equation:constraint:phi:4} is only approximately satisfied. Equivalently, donors are often selected such that Equation~\eqref{equation:constraint:phi} is only approximately satisfied.
\end{rem}

\begin{algorithm}\small
\caption{Procedure to obtain the matrix of imputation $\phib^{\bknn}$}
\label{algorithm:matrix:imp}
\begin{itemize}
    \item   \emph{Stratified balanced sampling}\\
            Select a stratified balanced sample $S$ in the cells population $\dot{U} = \left\{ (i,j) | i \in S_r, j \in S_m\right\}$ with the method proposed in~\cite{has:til:14} such that
            \begin{align*}
                \sum_{(i,j) \in \dot{U}_j} \frac{ \dot{\xb}_{(i,j)}  }{ \dot{\pi}_{(i,j)}} \mathbbm{1}_{(i,j) \in S}  &=   \sum_{(i,j) \in \dot{U}_j}  \dot{\xb}_{(i,j)}   \quad   \mbox{for each } j \in S_m,\\
                \E_I \left( \mathbbm{1}_{(i,j) \in S} \right) &= \dot{\pi}_{(i,j)} \quad \mbox{for each }  (i,j) \in \dot{U},
            \end{align*}
            where
            \begin{itemize}
                \item   $\dot{\xb}_{(i,j)} = d_j \psi_{ij}^{\bknn} \xb_i$ is the vector of balancing variables linked to unit $(i,j) \in S_r \times S_m$,
                \item   $\dot{\pi}_{(i,j)} =  \psi_{ij}^{\bknn}$ is the inclusion probability attached to unit $(i,j) \in S_r \times S_m$,
                \item   $\dot{U}_j = \left\{ (i,j) | i \in S_r\right\}$ is the stratum attached to unit $j \in S_m$.
            \end{itemize}
    \item   \emph{Matrix of imputation}\\
            Let $\phib^{\bknn}$ be the matrix defined as $\phi^{\bknn}_{ij} = \mathbbm{1}_{(i,j) \in S}$.
 \end{itemize}
 \end{algorithm}

\section{Approximation of conditional imputation variance}\label{section:estimation:imputation:variance}

A procedure to approximate the imputation variance conditional to the design and on the nonresponse mechanism of the total $\Var_{I}(\widehat{Y}_I)$ is described in this Section. For the sake of brevity, we refer to the latter variance as conditional imputation variance, because it is conditional to the sampling design and the nonresponse mechanism. The proposed procedure relies on the idea that, in the new method, the selection of donors is viewed as a sampling problem and imputation is achieved through stratified balanced sampling. However, all the values of the variable of interest are known prior to selecting the sample of imputed values. Indeed, donors are selected among respondents and the values of the variable of interest are fully observed for these. \cite{dev:til:05} proposed an approximation formula for the variance of a total under balanced sampling that can be used in this framework. Based on this approximation formula, we propose the following formula for the conditional imputation variance.
\begin{align}\label{var:imp:approx}
    \Var^{app}_{I}(\widehat{Y}_I) = \sum_{i \in S_r} \sum_{ \substack{j \in S_m \\ \psi_{ij}^{\bknn} \neq 0}}
                                            c_{ij} d_j^2 \left( y_i - \bb^\top \xb_i   \right)^2,
\end{align}
\begin{align}
    c_{ij}  &= \psi_{ij}^{\bknn} \left( 1 - \psi_{ij}^{\bknn} \right)
                \frac{ n_m k } { n_m k - q},\\
    \bb  &= \left( \sum_{i \in S_r} \sum_{ \substack{j \in S_m \\ \psi_{ij}^{\bknn} \neq 0}}  c_{ij} d_j^2 \xb_i \xb_i^\top  \right)^{-1}
               \sum_{i \in S_r} \sum_{ \substack{j \in S_m \\ \psi_{ij}^{\bknn} \neq 0}} c_{ij} d_j^2 \xb_i  y_i.
\end{align}

Notice that only a single respondents set is necessary to approximate the conditional imputation variance. However, it can underestimate this one. Indeed, this formula comes from the variance of the total for balanced sampling. When applying balanced sampling, it is often not possible to exactly satisfy the balancing constraints, which is referred to as a \emph{rounding problem}. A sample can thus be only approximately balanced \citep[see][]{dev:til:04a}.
Indeed, the balancing constraints must be relaxed in order to make it possible to select a sample. Hence, the variance of the total when balanced sampling is applied can be broken down into two terms: a first term derived under the hypothesis that the balancing equations are perfectly satisfied and a second term due to the rounding problem \citep[see][]{dev:til:05}. As the other approximations and estimators of the variance proposed in this framework, Formula~\eqref{var:imp:approx} does not capture the part of the variance due to the rounding problem and can therefore underestimate the conditional imputation variance.

The stronger the linear relation between the variable of interest and the auxiliary variables is, the more formula~\eqref{var:imp:approx} tends to underestimates the conditional imputation variance. To understand the reason for this, suppose that there is a strict linear relation between the variable of interest and the auxiliary variables. In this case, if the auxiliary variables are perfectly balanced, so is also the variable of interest. As a result, the term in the variance due to the balancing itself is null. Hence, the variance is only due to the rounding problem. In this case, an estimator that captures only the term due to the balancing therefore captures 0\% of the actual variance. Then, as the linear relation between the variable of interest and the auxiliary variables weakens, the variable of interest becomes less well balanced. This implies that the variance due to the balancing increases, and, therefore, that the part of the actual variance that is returned by an estimator that captures only the variance due to this one increases.

\section{Properties of the imputed total estimator}\label{section:model}

The performance of the imputed total estimator under the proposed imputation method relies on two underlying models: the linear model and the response model. Moreover, as donors are chosen in neighborhoods of recipients, the performance of the imputed total estimator depends on a third principle, which is the neighborhood principle. These models and principles are here detailed. Moreover, the asymptotic properties of the imputed total estimator are studied.

\subsection{Linear model}\label{section:linear:model}

\begin{thm}\label{thm:linear:model}
\begin{enumerate}
    \item   Suppose the data is MAR (or MCAR). Consider the linear model
            \begin{align*}
                \mbox{m:} \quad y_i = \betab ^\top \xb_i + \varepsilon_i \quad \mbox{with} \quad \E_m\left( \varepsilon_i \right) = 0,
            \end{align*}
            where $\E_m (.)$ denotes the expectation with respect to the model m. If the model m holds, then the b$k$NNI provides an unbiased imputed total estimator $\widehat{Y}_I$ in the sense that
            \begin{align*}
                \Bias \left( \widehat{Y}_I \right) = \E_m \E_p \E_q \E_I \left( \widehat{Y}_I - Y \right) = 0.
            \end{align*}
    \item   Moreover, if the relation between the variable of interest and the auxiliary variables is strictly linear, i.e.
            \begin{align*}
                y_i = \betab ^\top \xb_i,
            \end{align*}
            then the b$k$NNI provides an imputed total estimator $\widehat{Y}_I$ with a quasi-null imputation variance, i.e.
            \begin{align*}
            \Var_{imp} = \E_p \E_q \Var_I  \left( \widehat{Y}_I \right) \approx 0 ,
            \end{align*}
            regardless of the nonresponse process (MCAR, MAR or NMAR).
\end{enumerate}
\end{thm}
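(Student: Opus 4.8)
The plan is to reduce both parts to the imputation error $\widehat{Y}_I - \widehat{Y} = \sum_{j \in S_m} d_j(y_j^\ast - y_j)$, where $\widehat{Y} = \sum_{i \in S} d_i y_i$ is the complete-data Horvitz--Thompson estimator. Since $\widehat{Y}$ involves neither the nonresponse nor the imputation mechanism and is design-unbiased, $\E_m\E_p\E_q\E_I(\widehat{Y} - Y) = \E_m\E_p(\widehat{Y} - Y) = 0$, so all the work lies in the imputation error. Writing $y_j^\ast = \sum_{i \in S_r}\phi_{ij}^{\bknn} y_i$ and taking the conditional imputation expectation using~\eqref{equation:link:psi:phi:1} gives
\begin{align*}
\E_I\bigl(\widehat{Y}_I - \widehat{Y} \bigm| S, S_r\bigr) = \sum_{j \in S_m} d_j\Bigl(\sum_{i \in S_r}\psi_{ij}^{\bknn} y_i - y_j\Bigr).
\end{align*}

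For Part 1 I would substitute $y_i = \betab^\top\xb_i + \varepsilon_i$ here. The resulting $\betab$-term is $\betab^\top\sum_{j\in S_m}d_j(\sum_{i\in S_r}\psi_{ij}^{\bknn}\xb_i - \xb_j)$, which is exactly zero because $\psib^{\bknn}$ satisfies the calibration constraint~\eqref{equation:constraint:psi:3} (equivalently~\eqref{equation:constraint:psi}); this is the whole point of the calibration step of Algorithm~\ref{algorithm:matrix:imp:prob}. The leftover is the residual term $\sum_{j\in S_m}d_j(\sum_{i\in S_r}\psi_{ij}^{\bknn}\varepsilon_i - \varepsilon_j)$. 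I would then take $\E_m$ of this conditionally on $S$ and $S_r$: under MAR the sample $S$, the response set $S_r$, the neighborhoods $\mbox{knn}(j)$ and hence the whole matrix $\psib^{\bknn}$ are determined by the auxiliary values and by the sampling/response mechanisms and carry no information on the $\varepsilon_i$, so $\E_m(\varepsilon_i\mid S,S_r)=\E_m(\varepsilon_i)=0$ and the residual term has model-expectation zero. Taking the outer $\E_p\E_q$ of zero and adding the $\widehat{Y} - Y$ piece yields $\Bias(\widehat{Y}_I)=0$.

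For Part 2 I would use the strict relation $y_i = \betab^\top\xb_i$ directly: then $y_j^\ast = \betab^\top\xb_j^\ast$ and $y_j = \betab^\top\xb_j$, so $\widehat{Y}_I - \widehat{Y} = \betab^\top(\widehat{\Xb}_I - \widehat{\Xb})$. Since $\widehat{Y}$ and $\widehat{\Xb}$ are constant with respect to the imputation mechanism,
\begin{align*}
\Var_I(\widehat{Y}_I) = \Var_I\bigl(\betab^\top\widehat{\Xb}_I\bigr) = \betab^\top\Var_I(\widehat{\Xb}_I)\,\betab.
\end{align*}
The donor-selection step (Algorithm~\ref{algorithm:matrix:imp}) is built precisely so that~\eqref{equation:constraint:phi:0} holds, i.e. $\widehat{\Xb}_I = \E_I(\widehat{\Xb}_I)$, hence $\Var_I(\widehat{\Xb}_I)$ is (approximately) the zero matrix and $\Var_I(\widehat{Y}_I)\approx 0$, so $\Var_{imp}=\E_p\E_q\Var_I(\widehat{Y}_I)\approx 0$. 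No assumption on the response mechanism enters this argument, so it is valid under MCAR, MAR and NMAR alike.

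The hard part, such as it is, is bookkeeping rather than depth. In Part 1 one must justify interchanging the four expectations and, crucially, the fact that under MAR the matrix $\psib^{\bknn}$ and the sets $S_r,S_m$ are ``$\varepsilon$-free'', so that $\E_m$ can be pushed through to act term by term on the $\varepsilon_i$; under NMAR the response indicators depend on $y$ and hence on $\varepsilon$, the residual term need not be centered, and the argument breaks --- which is exactly why Part 1 needs MAR while Part 2 does not. The other caveat to flag is that the calibration constraint~\eqref{equation:constraint:psi:3} is only met up to the tolerance of Algorithm~\ref{algorithm:matrix:imp:prob} and that~\eqref{equation:constraint:phi:0} is only met up to the rounding problem of balanced sampling (Remark~\ref{rem:2}); a careful write-up should state which of these is taken to hold exactly, and it is the residual slack in~\eqref{equation:constraint:phi:0} that makes the Part 2 imputation variance ``quasi-null'' ($\approx 0$) rather than exactly zero.
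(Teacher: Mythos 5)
Your proposal is correct and follows essentially the same route as the paper: decompose $\widehat{Y}_I - Y$ into $(\widehat{Y}_I - \widehat{Y}) + (\widehat{Y} - Y)$, kill the $\betab$-term of the imputation error via the calibration constraint~\eqref{equation:constraint:psi:3} and invoke MAR to center the residual term for Part~1, and use $\widehat{Y}_I = \betab^\top\widehat{\Xb}_I$ together with the balancing property~\eqref{equation:constraint:phi:0} (up to the rounding problem of Remark~\ref{rem:2}) for Part~2. Your remarks on why MAR is needed only in Part~1 and on the source of the ``$\approx$'' are slightly more explicit than the paper's write-up, but the argument is the same.
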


The proof is given in the Appendix. It results that if the linear model $m$ reasonably fits the population data, then the b$k$NNI provides an almost unbiased imputed total estimator $\widehat{Y}_I$ with a small imputation variance.

\subsection{Response model}\label{section:response:model}

\begin{thm}\label{thm:response:model}
     Let $\psib^{\bknn} = \left( \psi_{ij}^{\bknn} \right), (i,j) \in S_r \times S_m$, be the matrix of imputation probabilities relative to the b$k$NNI. If
   \begin{align}
        \theta_i = \frac{ 1 }{ 1 + \sum_{j \in S_m} \frac{d_j}{d_i} \psi_{ij}^{\bknn}}, \label{estimated:response:probabilities}
   \end{align}
   perfectly fits the true response probability of each unit $i \in S_r$, then the b$k$NNI provides an unbiased imputed total estimator $\widehat{Y}_I$, i.e.
   \begin{align*}
        \Bias \left( \widehat{Y}_I \right) = \E_p \E_q \E_I \left( \widehat{Y}_I - Y \right) = 0.
   \end{align*}
\end{thm}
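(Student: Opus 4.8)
The plan is to evaluate the triple expectation $\E_p\E_q\E_I(\widehat{Y}_I)$ from the inside out. First I would use that the b$k$NNI is a hot-deck method, so the imputed value is $y_j^{*}=\sum_{i\in S_r}\phi_{ij}^{\bknn}y_i$ and hence $\widehat{Y}_I=\sum_{i\in S_r}d_iy_i+\sum_{j\in S_m}d_j\sum_{i\in S_r}\phi_{ij}^{\bknn}y_i$. Taking $\E_I$ and using the link $\E_I(\phi_{ij}^{\bknn})=\psi_{ij}^{\bknn}$ from Equation~\eqref{equation:link:psi:phi:1}, I would reorganise the double sum by respondent to obtain $\E_I(\widehat{Y}_I)=\sum_{i\in S_r}d_iy_i\bigl(1+\sum_{j\in S_m}(d_j/d_i)\psi_{ij}^{\bknn}\bigr)$, i.e.\ each respondent's value is weighted by one plus the total imputation weight it passes on to the nonrespondents. (The column-sum constraint~\eqref{equation:constraint:P:sum1:2} is what makes the realisation $\phib^{\bknn}$ assign exactly one donor per recipient, so this expression is well defined.)

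Next I would invoke the hypothesis directly: by assumption the bracketed factor equals $1/\theta_i$ for every responding unit, so $\E_I(\widehat{Y}_I)=\sum_{i\in S_r}(d_i/\theta_i)y_i$. Then I would take the expectation with respect to the nonresponse mechanism $q$, conditionally on $S$: writing $\sum_{i\in S_r}(d_i/\theta_i)y_i=\sum_{i\in S}r_i(d_i/\theta_i)y_i$ and using that the $r_i$ are independent Bernoulli with $\E_q(r_i\mid S)=\theta_i$ gives $\E_q\E_I(\widehat{Y}_I)=\sum_{i\in S}d_iy_i=\widehat{Y}$, the Horvitz--Thompson estimator. Finally $\E_p(\widehat{Y})=Y$ by design-unbiasedness of the Horvitz--Thompson estimator, and chaining the three steps yields $\E_p\E_q\E_I(\widehat{Y}_I)=Y$, so $\Bias(\widehat{Y}_I)=0$.

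The subtle point I would want to state carefully is that the factor $1+\sum_{j\in S_m}(d_j/d_i)\psi_{ij}^{\bknn}$ is a deterministic function of the realised response set $S_r$ (through $\psib^{\bknn}$), whereas $\theta_i$ is a fixed population parameter; the hypothesis is precisely that these coincide on the event $\{i\in S_r\}$. Read as this almost-sure identity over the draws of $S_r$, the substitution turning the $\E_I$-expression into $\sum_{i\in S_r}(d_i/\theta_i)y_i$ is legitimate, and the otherwise awkward $\E_q$ step --- which would be obstructed if one tried to carry the $S_r$-dependent factor through the nonresponse expectation --- collapses to the routine unbiasedness computation for an inverse-probability-weighted sum. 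This is the only delicate part; nothing else is needed beyond the hot-deck structure, the $\phib^{\bknn}$--$\psib^{\bknn}$ link, and the posited form~\eqref{estimated:response:probabilities} of the response probabilities. In particular neither the linear model of Property~\ref{thm:linear:model} nor the balancing identity~\eqref{equation:imputation:error:variance} plays any role.
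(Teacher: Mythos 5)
Your proof is correct and follows essentially the same route as the paper's: compute $\E_I(\widehat{Y}_I)=\sum_{i\in S_r}d_i\bigl(1+\sum_{j\in S_m}(d_j/d_i)\psi_{ij}^{\bknn}\bigr)y_i$, recognise it under the hypothesis as the propensity-score-adjusted estimator $\sum_{i\in S_r}(d_i/\theta_i)y_i$, and then apply the standard $\E_q$ and $\E_p$ unbiasedness computations. Your explicit treatment of the $\E_q$ step via the Bernoulli indicators, and your remark that the $S_r$-dependent factor must coincide with the fixed parameter $\theta_i$ for the substitution to commute with $\E_q$, are more careful than the paper's terse "this last expression represents the propensity score adjusted estimator," but the argument is the same.
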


The proof is given in the Appendix. It results that if the model stated above estimates the response probabilities for $i \in S_r$ reasonably well, then the imputed estimator $\widehat{Y}_I$ is an almost unbiased estimator for $Y$.

This result can be interpreted in the following way. Respondent $i \in S_r$ acts as a donor a certain number of times in such a way that in expectation its weight is equal to
\begin{align}
    d_i + \sum_{j \in S_m} d_j \psi_{ij}^{\bknn}.
\end{align}
If $\theta_i$ is the response probability of unit $i \in S_r$, then the bias due to nonresponse is compensated. This can happen when the auxiliary variables $\xb_i$ can describe the nonresponse mechanism, because the weights $\psi_{ij}^{\bknn}$ are obtained by calibration on the estimated totals of these variables.

\subsection{Neighborhood principle}\label{section:neighborhood}

If none of the two previous models hold, a third principle can correct the situation, namely the neighborhood principle. The neighborhood principle states that neighboring units (i.e units showing close auxiliary values) show close $y$ values.

\begin{thm}\label{thm:neighborhood:principle}
    Consider $(i,j) \in S_r \times S_m$. If the implication $$i \in knn(j) \quad \Rightarrow \quad  y_i - y_j  = 0$$ holds, then the b$k$NNI provides an unbiased imputed total estimator $\widehat{Y}_I$, i.e.
    \begin{align*}
        \Bias \left( \widehat{Y}_I \right) = \E_p \E_q \E_I \left( \widehat{Y}_I - Y \right) = 0.
    \end{align*}
\end{thm}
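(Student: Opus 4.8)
The plan is to show that, under the stated hypothesis, every imputed value coincides with the true value, so that $\widehat{Y}_I$ collapses to the complete-response Horvitz--Thompson estimator, whose design-unbiasedness then delivers the claim.

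First I would recall that the b$k$NNI is a hot-deck method, so by \eqref{matrix:imputation:definition} together with \eqref{equation:constraint:sum:imputation}, each nonrespondent $j \in S_m$ receives $y_j^* = \sum_{i \in S_r} \phi_{ij}^{\bknn} y_i$, where exactly one term in the sum is nonzero, say the one with index $i = i(j)$. Next I would argue that this donor $i(j)$ necessarily lies in $\mbox{knn}(j)$: in Algorithm~\ref{algorithm:matrix:imp} the cell $(i,j)$ is drawn with inclusion probability $\dot{\pi}_{(i,j)} = \psi_{ij}^{\bknn}$, and a cell with zero inclusion probability is never selected; combined with \eqref{equation:neighborhood:principle:2}, which forces $\psi_{ij}^{\bknn} = 0$ whenever $i \notin \mbox{knn}(j)$, this yields $\phi_{ij}^{\bknn} = 1 \Rightarrow i \in \mbox{knn}(j)$. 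Applying the hypothesis $i \in \mbox{knn}(j) \Rightarrow y_i - y_j = 0$ then gives $y_j^* = y_{i(j)} = y_j$ for every $j \in S_m$, with probability one and for every realization of $S$ and $S_r$ for which the hypothesis is in force.

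Substituting this into $\widehat{Y}_I = \sum_{i \in S_r} d_i y_i + \sum_{j \in S_m} d_j y_j^*$ collapses it \emph{pathwise} to $\sum_{i \in S_r} d_i y_i + \sum_{j \in S_m} d_j y_j = \sum_{i \in S} d_i y_i = \widehat{Y}$, the Horvitz--Thompson estimator one would have computed under complete response. Since this identity holds for every imputation outcome, $\E_I(\widehat{Y}_I) = \widehat{Y}$; and since $\widehat{Y}$ depends only on $S$ and not on which units responded, $\E_q \E_I(\widehat{Y}_I) = \widehat{Y}$. Taking the remaining expectation over the sampling design and using $\E_p(\widehat{Y}) = Y$ (design-unbiasedness of the Horvitz--Thompson estimator, since $\pi_i d_i = 1$) yields $\Bias(\widehat{Y}_I) = \E_p \E_q \E_I(\widehat{Y}_I - Y) = 0$.

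There is essentially no computational difficulty here; the only point deserving care is the step linking $\phi_{ij}^{\bknn}$ to the neighborhood --- that is, making explicit that donors are always drawn from $\mbox{knn}(j)$ because the support of the selection mechanism is contained in the support of $\psib^{\bknn}$ --- together with the (implicit) reading of the hypothesis as holding for every relevant realization of $S$ and $S_r$, so that the reduction $\widehat{Y}_I = \widehat{Y}$ can be performed before any expectations are taken.
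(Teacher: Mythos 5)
Your proof is correct. It takes the same overall decomposition as the paper ($\widehat{Y}_I - Y = (\widehat{Y}_I - \widehat{Y}) + (\widehat{Y} - \widehat{Y}) \ldots$, then design-unbiasedness of the Horvitz--Thompson estimator), but the middle step is handled differently. The paper stays at the level of the imputation probabilities: it computes $\E_I(\widehat{Y}_I - \widehat{Y}) = \sum_{j \in S_m} d_j \sum_{i \in S_r} \psi_{ij}^{\bknn}(y_i - y_j)$ and kills this sum using \eqref{equation:neighborhood:principle:2} together with the hypothesis, so the cancellation happens only after taking $\E_I$. You instead work pathwise with the realization $\phib^{\bknn}$: you observe that the selected donor almost surely lies in $\mbox{knn}(j)$ (because a cell with $\dot{\pi}_{(i,j)} = \psi_{ij}^{\bknn} = 0$ is never drawn), deduce $y_j^* = y_j$ for every $j$, and hence $\widehat{Y}_I = \widehat{Y}$ identically before any expectation is taken. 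Both arguments rest on the same fact --- the support of the imputation mechanism is contained in the neighborhoods --- but your pathwise version is slightly stronger: it shows not only $\Bias(\widehat{Y}_I) = 0$ but also that $\Var_I(\widehat{Y}_I) = 0$ under the stated hypothesis, a conclusion the paper's $\E_I$-level computation does not directly give. The one step you rightly flag as needing care (donor $\Rightarrow$ neighbor, via the support of $\psib^{\bknn}$) is exactly the step the paper also invokes, just phrased for $\psi_{ij}^{\bknn}$ rather than for $\phi_{ij}^{\bknn}$.
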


The proof is given in the Appendix. It results that if neighboring units (i.e units showing close auxiliary values) have $y$ values which are close, then the imputed estimator $\widehat{Y}_I$ is an almost unbiased estimator for $Y$.

\subsection{Resistance to model misspecification}\label{section:robustness}

The new method is resistant to model misspecification in terms of bias of the imputed total estimator $\widehat{Y}_I$. It is indeed sufficient that only one of the three models or principle stated above holds to obtain an unbiased imputed total estimator $\widehat{Y}_I$. However, a unique model provides an imputed total estimator $\widehat{Y}_I$ with a quasi-null imputation variance, namely the strictly linear model
$    y_i = \betab ^\top \xb_i$.

\subsection{Asymptotic properties of the total estimator}\label{section:asymptotic}
We now study the asymptotic properties of the total estimator under b$k$NNI. Suppose that there is a sequence of finite populations indexed by $\ell$ such that the population size $N_\ell$ and the sample size $n_\ell$ tend to $+\infty$ as $\ell \rightarrow +\infty$. Thereafter, index $\ell$ is omitted in order to make the notation less cluttered but the asymptotic results and convergences are understood to be as $\ell \rightarrow +\infty$. The following assumptions are considered:

\begin{enumerate}[({A}1):]
    \item   $\pi_{ij} - \pi_i \pi_j = \O{n}{N^2}$ for each $i, j \in U$, $i \neq j$.
    \item   $d_i = \O{N}{n}$ for each $i \in U$.
    \item   The data is MAR.
    \item   The following model holds:
            \begin{align}
                \mbox{m:} \quad y_i = \betab ^\top \xb_i + \varepsilon_i,
            \end{align}
            with $\E_m (\varepsilon_i) = 0$, $\E_m (\varepsilon_i\varepsilon_j) = \sigma^2 < +\infty$ if $i = j$ and 0 otherwise and where $\E_m (\cdot)$ denotes the expectation with respect to the model m.
    \item   The imputation design is exactly balanced, i.e.
            \begin{align}
                \sum_{i \in S_r} \phi_{ij}^{\bknn} \xb_i = \sum_{i \in S_r} \psi_{ij}^{\bknn} \xb_i,
            \end{align}
            for each $j \in S_m$.
\end{enumerate}

%
\begin{enumerate}[({A}1):] \setcounter{enumi}{5}
    \item   The approximation of the conditional imputation variance is exact, i.e.
            \begin{align}
                \Var_{I}(\widehat{Y}_I) = \Var^{app}_{I}(\widehat{Y}_I) = \sum_{i \in S_r} \sum_{ \substack{j \in S_m \\ \psi_{ij}^{\bknn} \neq 0}}
                                            c_{ij} d_j^2 \left( y_i - \bb^\top \xb_i   \right)^2,
            \end{align}
            where $c_{ij}$ and $\bb$ are defined below Formula~\eqref{var:imp:approx}.
    \item   $\#\left\{\psi_{ij}^{\bknn} \psi_{i\ell}^{\bknn} > 0  \left| i \in S_r \right. \right\} = \O{k^2}{n_m}$ for each $j, \ell \in S_m$ such that $j \neq \ell$. \\
            This hypothesis states that the way the donors distribute from one nonrespondent to another are not very dependent. This constraint is incompatible with the fact that the same respondents are always used as donors.
\end{enumerate}

\begin{prop}
    Suppose assumptions (A1) to (A7) hold. Then
    \begin{align}
        \frac{ \widehat{Y}_I - Y }{ N }
    \end{align}
    converges in probability to 0.
\end{prop}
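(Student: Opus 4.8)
The plan is to decompose the error as $\widehat{Y}_I - Y = (\widehat{Y} - Y) + (\widehat{Y}_I - \widehat{Y})$, where $\widehat{Y} = \sum_{i \in S_r} d_i y_i + \sum_{j \in S_m} d_j y_j$ is the Horvitz--Thompson estimator that would be computable under complete response, and to show that each of the two pieces, divided by $N$, tends to $0$ in mean square (which implies convergence in probability). The first piece is standard: $\widehat{Y}$ is design-unbiased, so $\E_p(\widehat{Y} - Y) = 0$, and assumptions (A1) and (A2) yield $\Var_p(\widehat{Y}/N) = N^{-2} \sum_{i,j \in U} (\pi_{ij} - \pi_i \pi_j) d_i d_j y_i y_j = \O{1}{n}$; Chebyshev's inequality then gives $(\widehat{Y} - Y)/N \to 0$ in probability, the $y_i$ being kept under control through the model (A4) together with the mild superpopulation regularity $N^{-1} \sum_{i \in U} \lVert \xb_i \rVert^2 = O(1)$.

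For the second piece, I would write $\widehat{Y}_I - \widehat{Y} = \sum_{j \in S_m} d_j \bigl( \sum_{i \in S_r} \phi_{ij}^{\bknn} y_i - y_j \bigr)$ and substitute the model $y_i = \betab^\top \xb_i + \varepsilon_i$. Applying assumption (A5) for each $j \in S_m$ and then the calibration constraint~\eqref{equation:constraint:psi:3} satisfied by $\psib^{\bknn}$, the $\betab^\top \xb$ contribution cancels exactly, so that
\[
    \widehat{Y}_I - \widehat{Y} = A + B, \qquad A = \sum_{i \in S_r} \tilde{d}_i \varepsilon_i - \sum_{j \in S_m} d_j \varepsilon_j, \qquad B = \sum_{j \in S_m} d_j \sum_{i \in S_r} \bigl( \phi_{ij}^{\bknn} - \psi_{ij}^{\bknn} \bigr) \varepsilon_i,
\]
where $\tilde{d}_i = \sum_{j \in S_m} d_j \psi_{ij}^{\bknn}$. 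Since the data is MAR (A3), conditioning on $(S, S_r)$ does not disturb the model moments $\E_m(\varepsilon_i) = 0$ and $\E_m(\varepsilon_i \varepsilon_j) = \sigma^2 \1_{i = j}$, which I would use freely below. It remains to show that $A/N$ and $B/N$ each tend to $0$ in mean square.

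For $B$ I would argue as follows. One has $\E_I(B) = 0$; moreover, by (A5) the vector $\sum_{i \in S_r} \phi_{ij}^{\bknn} \xb_i$ is non-random given $(S, S_r)$, so $\Var_I(B) = \Var_I(\widehat{Y}_I)$, which by (A6) equals $\sum_{i \in S_r} \sum_{j : \psi_{ij}^{\bknn} \neq 0} c_{ij} d_j^2 (y_i - \bb^\top \xb_i)^2$. Because $\bb$ is precisely the weighted-least-squares minimizer of that quadratic form in $\betab$ and the weights $c_{ij}$ are nonnegative, replacing $\bb$ by $\betab$ can only increase the sum; hence $\Var_I(B) \leq \sum_{i,j} c_{ij} d_j^2 \varepsilon_i^2$. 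Taking $\E_m$ and then $\E_p \E_q$, and using that $c_{ij} = O(1)$ (from $\psi_{ij}^{\bknn} \in [0,1]$ and $n_m k - q > 0$), that there are at most $k n_m$ nonzero summands (at most $k$ per column), and that $d_j^2 = \O{N^2}{n^2}$, yields $\E[(B/N)^2] = \O{k n_m}{n^2} \to 0$.

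The term $A$ is the crux. Conditionally on $(S, S_r)$ it has mean zero, and since $S_r$ and $S_m$ are disjoint, $\Var_m(A) = \sigma^2 \bigl( \sum_{i \in S_r} \tilde{d}_i^2 + \sum_{j \in S_m} d_j^2 \bigr)$, where $\sum_{j \in S_m} d_j^2 = \O{n_m N^2}{n^2}$. I would bound $\sum_{i \in S_r} \tilde{d}_i^2$ by expanding the square into a diagonal part $\sum_{i,j} d_j^2 (\psi_{ij}^{\bknn})^2 \leq \sum_{j \in S_m} d_j^2 = \O{n_m N^2}{n^2}$ (using $(\psi_{ij}^{\bknn})^2 \leq \psi_{ij}^{\bknn}$ and $\sum_i \psi_{ij}^{\bknn} = 1$) and an off-diagonal part $\sum_{j \neq \ell} d_j d_\ell \sum_{i \in S_r} \psi_{ij}^{\bknn} \psi_{i\ell}^{\bknn}$; here assumption (A7), together with $\psi^{\bknn}_{ij} \leq 1$, bounds $\sum_i \psi_{ij}^{\bknn} \psi_{i\ell}^{\bknn}$ by $\O{k^2}{n_m}$, making the off-diagonal part $\O{k^2}{n_m} \bigl( \sum_{j \in S_m} d_j \bigr)^2 = \O{k^2 n_m N^2}{n^2}$. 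Hence $\E[(A/N)^2] = \O{k^2 n_m}{n^2}$, and adding the three bounds gives $\E[((\widehat{Y}_I - Y)/N)^2] \to 0$, provided $k$ grows slowly enough (e.g.\ $k = o(\sqrt{n})$, which holds a fortiori when $k$ is taken essentially constant as recommended in Section~\ref{section:choice:k}); this proves the proposition. I expect the control of $A$ to be the main obstacle: it is exactly assumption (A7)---donors not being concentrated on a few respondents---that keeps the accumulated donor weights $\tilde{d}_i$ from reaching order $N n_m / n$ for some units, which would prevent $N^{-2} \sum_i \tilde{d}_i^2$ from vanishing. The other points needing care are the exact cancellation of the $\betab^\top \xb$ term via (A5) and~\eqref{equation:constraint:psi:3}, and the identity $\Var_I(B) = \Var_I(\widehat{Y}_I)$ combined with the least-squares optimality of the data-dependent coefficient $\bb$.
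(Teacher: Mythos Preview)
Your proof is correct and follows essentially the same route as the paper: your decomposition $\widehat{Y}_I - \widehat{Y} = A + B$ with $A = \E_I(\widehat{Y}_I - \widehat{Y})$ and $\E_I(B)=0$ is exactly the conditional-mean/fluctuation split underlying the paper's Lemmas~2--4, so that controlling $\Var_m(A)$ and $\E_m\Var_I(B)$ reproduces the paper's $\Var_m\E_I + \E_m\Var_I$ argument. The one minor divergence is in the off-diagonal term $\sum_{i}\psi_{ij}^{\bknn}\psi_{i\ell}^{\bknn}$: you use only $\psi_{ij}^{\bknn}\le 1$, which gives $O(k^2/n_m)$ and forces your side condition $k=o(\sqrt{n})$, whereas the paper invokes $\psi_{ij}^{\bknn}=O(1/k)$ (plausible, since calibration perturbs the initial values $1/k$, though not formally proved there either) to get $O(1/n_m)$ and hence $\Var_m\E_I = O(1/n)$ with no growth restriction on $k$.
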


The proof is given in the Appendix.

\section{Simulation study}\label{section:simulation}

A brief simulation study is conducted to test the performance of the new imputation method and to test the accuracy of
the proposed estimator for imputation variance.

\subsection{The data}

The MU284 population from~\cite{sar:swe:wre:92} was considered here. This data set is available in the \texttt{R sampling} package \citep{til:mat:07}. The following variables were considered (the initial names of the variables are written in brackets):
\begin{itemize}\setlength{\itemsep}{-3pt}
  \item $\yb$:  revenues from 1985 municipal taxation, in millions of kronor (\texttt{RMT85}),
  \item $\xb^1$: 1985 population, in thousands (\texttt{P85}),
  \item $\xb^2$: 1975 population, in thousands (\texttt{P75}),
  \item $\xb^3$: number of Conservative seats in municipal council (\texttt{CS82}).
\end{itemize}
The correlations between $\yb$ and the variables $\xb^1$, $\xb^2$ and $\xb^3$ are respectively $0.96$, $0.97$ and $0.52$.
The population size is $N=284$. Two cases were considered, namely
\begin{enumerate}[{Case} 1:]
    \item   The three auxiliary variables ($\xb^1$, $\xb^2$, and $\xb^3$) were considered,
    \item   Only the auxiliary variable that is the less correlated to $\yb$ (namely $\xb^3$) was considered.
\end{enumerate}
The model $m$ defined in Section~\ref{section:linear:model} induces a $R^2$ which is approximately 0.94 and 0.27 in Case 1 and in Case 2 respectively.

\subsection{Simulation settings}

A census was considered, which means that $\pi_i = d_i = 1$ for each unit $i$ of the population $U = \left\{1,2,\ldots,N\right\}$. The sample therefore matches the population, i.e. $S=U$.
One hundred respondents sets were created by generating 100 response indicator vectors $R$. Each component $r_i, i \in U$ of $R$ was generated from a Bernoulli distribution with parameter
\begin{align*}
    \theta_i = \frac{ 1 }{ 1 + \exp \left( 1 - \beta x_{i\ell} \right)},
\end{align*}
where $\beta$ is a positive coefficient used to reach the mean response rate 70\% (MAR), $x_{i\ell}$ is the value of the variable $\xb^\ell$ for unit $i \in U $, and $\ell = 1,3$ in Case 1 and in Case 2 respectively.


For each respondents set, 100 imputations were conducted with each of the following methods:
\begin{itemize}\setlength{\itemsep}{-3pt}
    \item   NNI: nearest-neighbor,
    \item   PMM: predictive mean matching proposed by~\cite{lit:88},
    \item   SRS: random hot-deck, donors randomly selected with replacement in the respondents set,
    \item   SRSWOR: same as SRS except that donors are selected without replacement as proposed in~\cite{kal:kis:81,kal:kis:84},
    \item   $k$NNI: $k$-nearest neighbor,
    \item   b$k$NNI: proposed method, balanced $k$-nearest neighbor,
\end{itemize}
with $k = 20$. For each imputation, the total, the 10th percentile, the 90th percentile, and the variance of the imputed variable of interest were estimated. Note that, for b$k$NNI, the matrix of imputation probabilities $\psib^{\bknn}$  was replaced by the matrix of imputation probabilities $\psib^{\knn}$ defined in Expression~\eqref{equation:psi:knn} for the simulations in which Algorithm~\ref{algorithm:matrix:imp:prob} failed to find a solution (see Section~\ref{section:choice:k}). Moreover, for each simulation, the imputation variance of the total obtained with the proposed method was estimated using Expression~\eqref{var:imp:approx}.

\subsection{Measures of comparison}

In order to measure the bias of the imputed estimator $\widehat{\theta}_I$ for a parameter $\theta$, the Monte Carlo relative bias $RB$ was considered. It is defined as
\begin{align*}\displaystyle
    \mbox{RB} \left(\widehat{\theta}_I\right) = \frac{ \widehat{\theta}_I^* - \theta}{\theta},
\end{align*}
where
\begin{align*}
  \widehat{\theta}_I^*  = \frac{1}{M_R}\frac{1}{M_I}\sum_{r = 1}^{M_R} \sum_{i = 1}^{M_I} \widehat{\theta}^{r,i}_I,
\end{align*}
$M_R = 100$ is the number of respondents sets generated, $M_I = 100$ is the number of imputations conducted for each respondents set, and $\widehat{\theta}^{r,i}_I$ is the estimate obtained for the $i$-th imputation of the $r$-th respondents set generated. The quantity $\widehat{\theta}_I^*$ therefore represents the mean of the estimated value of the parameter $\theta$ over the $M_R M_I$ simulations. The variability of the imputed estimator $\widehat{\theta}_I$ was measured through the Monte Carlo relative root mean square error (RRMSE) defined as
\begin{align*}
    \mbox{RRMSE} \left(\widehat{\theta}_I\right) =
            \frac{ \sqrt{   \mbox{MSE} \left(\widehat{\theta}_I\right) }}{\theta},
\end{align*}
where
\begin{align*}
    \mbox{MSE} \left(\widehat{\theta}_I\right) =
              \displaystyle \frac{1}{M_R}\frac{1}{M_I}\sum_{r = 1}^{M_R} \sum_{i = 1}^{M_I} \left( \widehat{\theta}^{r,i}_I - \theta \right)^2.
\end{align*}
Finally, the Monte Carlo relative root imputation variance (RRIV), or relative imputation standard deviation, of the imputed estimator $\widehat{\theta}_I$ was computed in order to measure the amount of variance due to imputation. It is defined as
\begin{align*}
    \mbox{RRIV}\left(\widehat{\theta}_I\right) = \frac{ \displaystyle \sqrt{ \mbox{IV}\left(\widehat{\theta}_I\right) }} {\theta},
\end{align*}
where
\begin{align*}
    \mbox{IV}\left(\widehat{\theta}_I\right) = \frac{1}{M_R}\sum_{r = 1}^{M_R} \frac{1}{M_I - 1} \sum_{i = 1}^{M_I} \left( \widehat{\theta}^{r,i}_I -  \widehat{\theta}_I^r   \right)^2,
\end{align*}
and
\begin{align*}
     \widehat{\theta}_I^r = \frac{1}{M_I} \sum_{i = 1}^{M_I} \widehat{\theta}^{r,i}_I
\end{align*}
represents the mean estimated value of $\theta$ for the $r$-th respondents set.

In order to test the accuracy of the variance formula of~Expression~\eqref{var:imp:approx}, the average over the simulations of the approximated conditional imputation variance was computed, namely
\begin{equation}
    \frac{1}{M_R}\sum_{r = 1}^{M_R} \Var^{app}_{I}(\widehat{Y}_I)^{r},
\end{equation}
where $\Var^{app}_{I}(\widehat{Y}_I)^{r}$ is the imputation variance obtained with Expression~\eqref{var:imp:approx} for the $r$-th respondents set generated. That one was then compared to the Monte Carlo imputation variance of the total $\mbox{IV}\left(\widehat{Y}_I\right)$ defined above.

\subsection{Results of the simulations}

Table~\ref{table:1} and Table~\ref{table:2} show measures of comparison for the six imputation methods considered in Case 1 and Case 2 respectively. Table~\ref{table:variance} displays the average over the simulations of the estimated imputation variance of the total as well as the Monte Carlo imputation variance of the total.


The results confirm that the proposed method (b$k$NNI) performs particularly well when there is a strong linear relation between the variable of interest and the auxiliary variable, as in Case 1 ($R^2 \approx 0.94$). Indeed, results of Table~\ref{table:1} show that, in Case 1, b$k$NNI outperforms the other donor imputation methods considered. It provides the smallest RB and the smallest RRMSE for each parameter of interest considered.

Moreover, Table~\ref{table:1} and Table~\ref{table:2} show that the neighborhood principle and the balancing principle have an effect on the imputation variance of the total. This effect depends on the strength of the relation between the variable of interest and the auxiliary variables. Indeed, in Case 1 (strong linear relation) RRIV of the total is 0.094 for SRS, which reduces to 0.008 for $k$NNI (neighborhood principle) and to 0.002 for b$k$NNI (neighborhood principle and balancing principle) whereas, in Case 2 (weak linear relation) these figures are 0.093, 0.019, and 0.016.

%

The results also show that selecting the donors without replacement (SRSWOR) among respondents induces a smaller imputation variance than selecting them with replacement (SRS), which is in agreement with~\cite{kal:kis:81,kal:kis:84}.

Finally, the results confirm that the performance of the proposed method relies on the strength of the linear relation that governs the data. Indeed, in Case~2 (Table~\ref{table:2}) this linear relation is much weaker than in Case 1 and the proposed method shows diminished performance compared to that observed in Case~1. Note that, in Case~2, the proposed method nevertheless still performs better overall than the other methods considered.


\begin{table}[!htb]\caption{Monte Carlo relative bias (RB), Monte Carlo relative root mean square error (RRMSE), and Monte Carlo relative root imputation variance (RRIV) for the total estimation, the 10-th percentile estimation, the 90-th percentile estimation, and the variance estimation of the variable of interest $\yb$ in Case 1.       \label{table:1}}
\begin{center}\small
\begin{tabular} {llrrr}
\toprule
                    &                & \multicolumn{3}{c}{Monte Carlo estimates} \\
                                     \cmidrule(r){3-5}
Parameter of interest&   Method      & RB    & RRMSE & RRIV   \\
\midrule
Total               &   NNI          & 0.008 & 0.010 & 0.000\\
                    &   PMM          & 0.015 & 0.017 & 0.000\\
                    &   SRS          & 0.281 & 0.297 & 0.094\\
                    &   SRSWOR       & 0.278 & 0.288 & 0.070\\
                    &   $k$NNI       & 0.030 & 0.032 & 0.008\\
                    &   b$k$NNI      &-0.001 & 0.003 & 0.002\\
\midrule
10-th percentile    &   NNI          & 0.067 & 0.093 & 0.011\\
                    &   PMM          & 0.039 & 0.093 & 0.010\\
                    &   SRS          & 0.187 & 0.120 & 0.040\\
                    &   SRSWOR       & 0.186 & 0.198 & 0.031\\
                    &   $k$NNI       & 0.098 & 0.124 & 0.046\\
                    &   b$k$NNI      & 0.006 & 0.083 & 0.053\\
\midrule
90-th percentile    &   NNI          & 0.002 & 0.009 & 0.000\\
                    &   PMM          & 0.005 & 0.013 & 0.000\\
                    &   SRS          & 0.246 & 0.256 & 0.068\\
                    &   SRSWOR       & 0.248 & 0.255 & 0.054\\
                    &   $k$NNI       & 0.009 & 0.018 & 0.015\\
                    &   b$k$NNI      & 0.000 & 0.006 & 0.005\\
\midrule
Variance            &   NNI          &-0.001 & 0.002 & 0.000\\
                    &   PMM          &-0.002 & 0.002 & 0.000\\
                    &   SRS          & 0.389 & 0.533 & 0.361\\
                    &   SRSWOR       & 0.379 & 0.466 & 0.267\\
                    &   $k$NNI       &-0.004 & 0.004 & 0.002\\
                    &   b$k$NNI      & 0.000 & 0.001 & 0.000 \\
\bottomrule
\end{tabular}
\end{center}
\end{table}

\begin{table}[!htb]\caption{Monte Carlo relative bias (RB), Monte Carlo relative root mean square error (RRMSE), and Monte Carlo relative root imputation variance (RRIV) for the total estimation, the 10-th percentile estimation, the 90-th percentile estimation, and the variance estimation of the variable of interest $\yb$ in Case 2.       \label{table:2}}
\begin{center}\small
\begin{tabular} {llrrr}
\toprule
                    &                & \multicolumn{3}{c}{Monte Carlo estimates} \\
                                     \cmidrule(r){3-5}
Parameter of interest&   Method      & RB    & RRMSE & RRIV   \\
\midrule
Total               &   NNI          &-0.001 & 0.030 & 0.017\\
                    &   PMM          & 0.000 & 0.030 & 0.017\\
                    &   SRS          & 0.207 & 0.230 & 0.093\\
                    &   SRSWOR       & 0.207 & 0.222 & 0.069\\
                    &   $k$NNI       & 0.004 & 0.030 & 0.019\\
                    &   b$k$NNI      &-0.001 & 0.028 & 0.016\\
\midrule
10-th percentile    &   NNI          &-0.013 & 0.080 & 0.047\\
                    &   PMM          &-0.012 & 0.080 & 0.047\\
                    &   SRS          & 0.092 & 0.108 & 0.036\\
                    &   SRSWOR       & 0.092 & 0.105 & 0.027\\
                    &   $k$NNI       & 0.023 & 0.076 & 0.046\\
                    &   b$k$NNI      & 0.005 & 0.074 & 0.045\\
\midrule
90-th percentile    &   NNI          & 0.004 & 0.051 & 0.034\\
                    &   PMM          & 0.005 & 0.052 & 0.034\\
                    &   SRS          & 0.193 & 0.211 & 0.072\\
                    &   SRSWOR       & 0.193 & 0.207 & 0.056\\
                    &   $k$NNI       & 0.004 & 0.053 & 0.036\\
                    &   b$k$NNI      &-0.001 & 0.052 & 0.034\\
\midrule
Variance            &   NNI          &-0.001 & 0.094 & 0.054\\
                    &   PMM          &-0.001 & 0.094 & 0.052\\
                    &   SRS          & 0.372 & 0.525 & 0.356\\
                    &   SRSWOR       & 0.376 & 0.473 & 0.268\\
                    &   $k$NNI       &-0.003 & 0.088 & 0.061\\
                    &   b$k$NNI      &-0.008 & 0.076 & 0.044\\
\bottomrule
\end{tabular}
\end{center}
\end{table}


The results in Table~\ref{table:variance} confirm that Formula~\eqref{var:imp:approx} can underestimate the imputation variance. The magnitude of this underestimation goes along with the strength of the linear relation between the variable of interest and the auxiliary variables. Indeed, in Case~2 (weak linear relation), the average approximated conditional imputation variance represents more than the 90\% of the Monte Carlo imputation variance of the total. This quantity drops to approximately 60\% in Case~1 (strong linear relation).

\begin{table}[htb]
\begin{center}
\caption{Average over the simulations of the approximated conditional imputation variance of Expression~(\ref{var:imp:approx}),
Monte Carlo imputation variance of the total and ratio of these two quantities in two different cases.\label{table:variance}}
\begin{tabular} {lrr}
\toprule
                                    & \multicolumn{2}{c}{Case} \\ \cmidrule(r){2-3}
                                    & 1    & 2    \\
\midrule
Average approx. imputation variance
                                    &  8172.74  &  1244589.00\\
Monte Carlo imputation variance
                                    & 13146.21  &  1327932.00\\
Ratio                               &     0.62  &        0.94\\
\bottomrule
\end{tabular}
\end{center}
\end{table}

\section{Conclusion}\label{section:conclusion}

In this paper, a new method of random hot-deck imputation, called balanced $k$-nearest neighbor, has been proposed. This method has the interesting property of being a donor imputation. It therefore produces observed and feasible values. The novelty of this method is that the selection of donors is viewed as a sampling problem and uses calibration and balanced sampling. Also, selection of donors is achieved in a nonparametric manner as donors are selected in neighborhoods of recipients. As this method is random, it can be used for total estimation as well as for quantiles and variance estimation.

This method offers the nice advantage that it produces a total estimator with negligible imputation variance and a quasi-null bias in specified cases. Indeed, the method involves three underlying models or principles. They provide conditions for the imputed total estimator to be an unbiased estimator and for the imputation variance of that estimator to cancel. The method is resistant to model misspecification in terms of bias but a unique model results in a quasi-null imputation variance of the total.

A formula to approximate the conditional imputation variance of the total has been suggested. The procedure used is inspired by that applied to estimate the variance of the total for balanced sampling. The proposed approximation tends to underestimate the conditional imputation variance of the total.

Finally, a simulation study has been conducted to test the performance of the proposed method and that of the approximation formula of conditional imputation variance. It has been confirmed that the new method performs well when a strong linear relation governs the data and that this performance decreases as this linear relation weakens. Lastly, it was confirmed that the formula for imputation variance of the total tends to underestimate the conditional imputation variance of this one. Note that the estimation of the variance due to the rounding problem is still an unresolved problem. This variance can also be approximated by multiple imputations.

\section*{Acknowledgements}
This research was supported by the Swiss Federal Statistical Office and by the Swiss National Science Foundation (project number P1NEP2\_151904). 

\clearpage

\begin{thebibliography}{}

\bibitem[Andridge and Little, 2010]{and:lit:10}
Andridge, R.~R. and Little, R. J.~A. (2010).
\newblock A review of dot deck imputation for survey non-response.
\newblock {\em International Statistical Review}, 78:40--64.

\bibitem[Chauvet, 2009]{cha:09}
Chauvet, G. (2009).
\newblock Stratified balanced sampling.
\newblock {\em Survey Methodology}, 35:115--119.

\bibitem[Chauvet et~al., 2011]{cha:dev:haz:11}
Chauvet, G., Deville, J.-C., and Haziza, D. (2011).
\newblock On balanced random imputation in surveys.
\newblock {\em Biometrika}, 98:459--471.

\bibitem[Chen et~al., 2000]{che:rao:sit:00}
Chen, H.~L., Rao, J. N.~K., and Sitter, R.~R. (2000).
\newblock Efficient random imputation for missing survey data in complex
  survey.
\newblock {\em Statistica Sinica}, 10:1153--1169.

\bibitem[Dahl, 2007]{dah:07}
Dahl, F.~A. (2007).
\newblock Convergence of random $k$-nearest-neighbour imputation.
\newblock {\em Computational Statistics \& Data Analysis}, 51:5913--5917.

\bibitem[Deville and S\"arndal, 1992]{dev:sar:92}
Deville, J.-C. and S\"arndal, C.-E. (1992).
\newblock Calibration estimators in survey sampling.
\newblock {\em Journal of the American Statistical Association}, 87:376--382.

\bibitem[Deville and Till\'e, 2004]{dev:til:04a}
Deville, J.-C. and Till\'e, Y. (2004).
\newblock Efficient balanced sampling: The cube method.
\newblock {\em Biometrika}, 91:893--912.

\bibitem[Deville and Till\'e, 2005]{dev:til:05}
Deville, J.-C. and Till\'e, Y. (2005).
\newblock Variance approximation under balanced sampling.
\newblock {\em Journal of Statistical Planning and Inference}, 128:569--591.

\bibitem[Fuller and Kim, 2005]{ful:kim:05}
Fuller, W.~A. and Kim, J.~K. (2005).
\newblock Hot-deck imputation for the response model.
\newblock {\em Survey Methodology}, 31:139--149.

\bibitem[Hasler and Tillé, 2014]{has:til:14}
Hasler, C. and Tillé, Y. (2014).
\newblock Fast balanced sampling for highly stratified population.
\newblock {\em Computational Statistics \& Data Analysis}, 74:81--94.

\bibitem[Horvitz and Thompson, 1952]{hor:tho:52}
Horvitz, D.~G. and Thompson, D.~J. (1952).
\newblock A generalization of sampling without replacement from a finite
  universe.
\newblock {\em Journal of the American Statistical Association}, 47:663--685.

\bibitem[Kalton and Kish, 1981]{kal:kis:81}
Kalton, G. and Kish, L. (1981).
\newblock Two efficient random imputation procedures.
\newblock In {\em Proceedings of the Section on Survey Research Methods}, pages
  146--153. American Statistical Association.

\bibitem[Kalton and Kish, 1984]{kal:kis:84}
Kalton, G. and Kish, L. (1984).
\newblock Some efficient random imputation methods.
\newblock {\em Communications in Statistics}, A13:1919--1939.

\bibitem[Kim and Fuller, 2004]{kim:ful:04}
Kim, J.~K. and Fuller, W.~A. (2004).
\newblock Fractional hot-deck imputation.
\newblock {\em Biometrika}, 91:559--578.

\bibitem[Little, 1988]{lit:88}
Little, R. J.~A. (1988).
\newblock Missing-data adjustments in large surveys.
\newblock {\em Journal of Business and Economic Statistics}, 6:287--296.

\bibitem[Rubin, 1976]{rub:76}
Rubin, D.~B. (1976).
\newblock Inference and missing data.
\newblock {\em Biometrika}, 63:581--592.

\bibitem[Rubin, 1987]{rub:87}
Rubin, D.~B. (1987).
\newblock {\em Multiple Imputation for Nonreponse in Surveys}.
\newblock Wiley, New York.

\bibitem[S\"arndal et~al., 1992]{sar:swe:wre:92}
S\"arndal, C.-E., Swensson, B., and Wretman, J.~H. (1992).
\newblock {\em Model Assisted Survey Sampling}.
\newblock Springer, New York.

\bibitem[Till\'e and Matei, 2007]{til:mat:07}
Till\'e, Y. and Matei, A. (2007).
\newblock {\em The R Package Sampling}.
\newblock The Comprehensive R Archive Network, \underline{http://cran.
  r-project. org/}, Manual of the Contributed Packages.

\end{thebibliography}

\newpage
\appendix

\section*{Proofs of the properties}\label{appendix:proof:thm}

\begin{proof}[Proof of Property~\ref{thm:linear:model}]~\\
    \begin{enumerate}
        \item   \begin{align*}
                    \E_m \E_I \left( \widehat{Y}_I - \widehat{Y} \right)
                        &=  \E_m  \left( \sum_{i \in S_r} d_i y_i + \sum_{j \in S_m} d_j  \sum_{i \in S_r} \psi^{\bknn}_{ij} y_i - \sum_{i \in S} d_i y_i \right)\\
                        &=  \left( \sum_{i \in S_r} d_i \betab^\top\xb_i + \sum_{j \in S_m} d_j  \sum_{i \in S_r} \psi^{\bknn}_{ij} \betab^\top\xb_i  - \sum_{i \in S} d_i \betab^\top\xb_i  \right)\\
                        &= \betab^\top \left[ \E_I \left( \widehat{\Xb}_I \right) - \widehat{\Xb} \right] = 0,
                \end{align*}
                where the last equality comes from Equation~\eqref{equation:constraint:psi}. As it is supposed that the data is MAR, the expectation with respect to $m$, the one with respect to $p$, and the one with respect to $q$ can be reversed. It therefore produces
                \begin{align*}
                    \Bias \left( \widehat{Y}_I \right) &=  \E_m \E_p \E_q \E_I \left( \widehat{Y}_I - Y \right)
                                                            =   \E_m \E_p \E_q \E_I \left( \widehat{Y}_I - \widehat{Y} + \widehat{Y} - Y \right)\\
                                                            &=  \E_m \E_p \E_q \E_I \left( \widehat{Y}_I - \widehat{Y} \right)
                                                            =   \E_p \E_q \E_m \E_I \left( \widehat{Y}_I - \widehat{Y} \right) = 0.
                \end{align*}
        \item   If $y_i = \betab ^\top \xb_i $,
                \begin{align*}
                    \widehat{Y}_I - \E_I \left(\widehat{Y}_I\right) = \betab^\top\widehat{\Xb}_I - \E_I \left(\betab^\top\widehat{\Xb}_I\right)
                                                                    = \betab^\top \left[ \widehat{\Xb}_I - \E_I \left(\widehat{\Xb}_I\right) \right] \approx 0,
                \end{align*}
                where the last approximation comes from Remark~\ref{rem:2} (page~\pageref{rem:2}).
                Therefore
                \begin{align*}
                    \Var_I  \left( \widehat{Y}_I \right) = \E_I \left[ \widehat{Y}_I - \E_I \left( \widehat{Y}_I\right) \right]^2 \approx 0,
                \end{align*}
                and
                \begin{align*}
                    \Var_{imp} = \E_p \E_q \Var_I  \left( \widehat{Y}_I \right) \approx 0.
                \end{align*}
    \end{enumerate}
\end{proof}

\begin{proof}[Proof of Property~\ref{thm:response:model}]
    \begin{align*}
        \E_I \left( \widehat{Y}_I \right)   &= \sum_{i \in S_r} d_i y_i + \sum_{j \in S_m} d_j \sum_{i \in S_r} \psi_{ij}^{\bknn} y_i
                                            = \sum_{i \in S_r} d_i \left( 1 + \sum_{j \in S_m} \frac{d_j}{d_i} \psi_{ij}^{\bknn} \right) y_i\\
                                            &= \sum_{i \in S_r} d_i \frac{1}{\theta_i} y_i.
    \end{align*}
    If $\theta_i$ is the true response probability, this last expression represents the propensity score adjusted estimator. Therefore
    \begin{align*}
        \E_p \E_q \E_I \left( \widehat{Y}_I \right) = \E_p \E_q \left( \sum_{i \in S_r} d_i \frac{1}{\theta_i} y_i \right) = Y,
    \end{align*}
    and consequently
    \begin{align*}
        \Bias \left( \widehat{Y}_I \right) =  \E_p \E_q \E_I \left( \widehat{Y}_I - Y \right) = 0.
    \end{align*}
\end{proof}

\begin{proof}[Proof of Property~\ref{thm:neighborhood:principle}]~\\
    We have
    \begin{align*}
        \E_I \left( \widehat{Y}_I - \widehat{Y} \right) &= \sum_{j \in S_m} d_j \sum_{i \in S_r } \psi_{ij}^{\bknn} y_i - \sum_{j \in S_m} d_j y_j \\
                                                        &= \sum_{j \in S_m} d_j \sum_{i \in S_r } \psi_{ij}^{\bknn} y_i - \sum_{j \in S_m} d_j \sum_{i \in S_r } \psi_{ij}^{\bknn} y_j \\
                                                        &= \sum_{j \in S_m} d_j \sum_{i \in S_r } \psi_{ij}^{\bknn} (y_i - y_j) \\
                                                        &= 0,
    \end{align*}
    where the last equality comes from the fact that $\psi_{ij}^{\bknn}$ is nonzero only if $i \in knn(j)$ and from the hypothesis of the property. Therefore, it produces
    \begin{align*}
        \Bias \left( \widehat{Y}_I \right) &=
                \E_p \E_q \E_I \left( \widehat{Y}_I - Y \right)
                = \E_p \E_q \E_I \left( \widehat{Y}_I  - \widehat{Y} + \widehat{Y} - Y \right) \\
                &= \E_p \E_q \E_I \left( \widehat{Y}_I - \widehat{Y} \right) = 0.
    \end{align*}
\end{proof}

\section*{Proof of the Proposition}\label{appendix:proof:prop}

The proof of the Proposition requires the four following lemmas.

\begin{lem}\label{lem:1}
    Suppose that assumptions (A1) and (A2) hold. Then $\frac{ \widehat{Y} - Y}{N}$
    converges in probability to 0.
\end{lem}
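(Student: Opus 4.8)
The plan is to treat $\widehat{Y} = \sum_{i\in S} d_i y_i = \sum_{i\in U} d_i y_i \mathbbm{1}_{i\in S}$ as the Horvitz--Thompson estimator, which is design-unbiased, so that $\E_p(\widehat{Y}) = Y$ and hence $\E_p\!\left(\frac{\widehat{Y} - Y}{N}\right) = 0$. Since $\widehat{Y}-Y$ depends on the random sample only through the indicators $\mathbbm{1}_{i\in S}$ (the $y_i$ being fixed population values), it suffices to work under the sampling design $p$, and convergence in probability of $(\widehat{Y}-Y)/N$ to $0$ follows from Chebyshev's inequality as soon as $N^{-2}\Var_p(\widehat{Y}) \to 0$. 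The whole argument thus reduces to a variance bound.

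For that bound I would start from the standard expression
\begin{align*}
    \Var_p(\widehat{Y}) = \sum_{i\in U}\sum_{j\in U} (\pi_{ij} - \pi_i\pi_j)\, d_i d_j\, y_i y_j,
\end{align*}
with the convention $\pi_{ii} = \pi_i$, and split it into the diagonal part $\sum_{i\in U}\pi_i(1-\pi_i) d_i^2 y_i^2$ and the off-diagonal part $\sum_{i\neq j}(\pi_{ij}-\pi_i\pi_j) d_i d_j y_i y_j$. For the diagonal part, $\pi_i(1-\pi_i) d_i^2 \le d_i = \O{N}{n}$ by (A2), so it is of order $\frac{N}{n}\sum_{i\in U} y_i^2$. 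For the off-diagonal part, (A1) gives $|\pi_{ij}-\pi_i\pi_j| = \O{n}{N^2}$ and (A2) gives $d_i d_j = O(N^2/n^2)$, so every summand is $\O{1}{n}$ in absolute value; bounding $\sum_{i\neq j}|y_i|\,|y_j| \le \bigl(\sum_{i\in U}|y_i|\bigr)^2 \le N\sum_{i\in U} y_i^2$ by Cauchy--Schwarz shows this part is also of order $\frac{N}{n}\sum_{i\in U} y_i^2$. Hence $\Var_p(\widehat{Y})$ is of order $\frac{N}{n}\sum_{i\in U} y_i^2$, and dividing by $N^2$ gives
\begin{align*}
    \frac{1}{N^2}\Var_p(\widehat{Y}) = O\!\left( \frac{1}{n}\cdot\frac{1}{N}\sum_{i\in U} y_i^2 \right),
\end{align*}
which tends to $0$ provided $\frac{1}{N}\sum_{i\in U} y_i^2$ remains bounded as $\ell\to\infty$. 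Chebyshev's inequality then completes the proof.

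The main --- indeed the only --- delicate point is this boundedness of the population second moment of the $y_i$, which is not literally listed among (A1)--(A2); it belongs to the standard regularity implicitly required of a sequence of finite populations, and is in any case implied by the linear model (A4) together with bounded auxiliary vectors $\xb_i$. Everything else is the textbook Horvitz--Thompson variance computation fed directly with the rates in (A1) and (A2); no feature of the imputation or nonresponse mechanism enters here, since $\widehat{Y}$ is built only from the fully observed population values.
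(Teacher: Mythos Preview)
Your proof is correct and follows essentially the same route as the paper: design-unbiasedness of the Horvitz--Thompson estimator, the standard variance formula split into diagonal and off-diagonal parts, the rates from (A1)--(A2) giving an $O(1/n)$ bound on $N^{-2}\Var_p(\widehat{Y})$, and Chebyshev's inequality to conclude. Your treatment is in fact slightly more explicit than the paper's, which simply writes ``$= O(1/n)$'' after the split without spelling out the Cauchy--Schwarz step or flagging the implicit boundedness of $N^{-1}\sum_{i\in U} y_i^2$; you are right that this regularity condition is used but not listed among (A1)--(A2).
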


\begin{proof}
    As $\widehat{Y}$ is a design unbiased estimator of $Y$, we have
    \begin{align}
        \E_{p} \left( \frac{ \widehat{Y} - Y}{N} \right) = 0.
    \end{align}
    Moreover, we have
    \begin{align}
        \Var_p \left( \frac{ \widehat{Y} - Y}{N} \right)
            &= \frac{1}{N^2} \sum_{i \in U} \sum_{j \in U} \frac{ \pi_{ij} - \pi_i \pi_j }{ \pi_i \pi_j} y_i y_j    \\
            &\leq \frac{1}{N^2}\sum_{i \in U} \sum_{\substack{j \in U \\j \neq i}} \frac{ \pi_{ij} - \pi_i \pi_j }{ \pi_i \pi_j} y_i y_j + \frac{1}{N^2}\sum_{i \in U} \frac{ 1 }{ \pi_i } y_i^2\\
            &= \O{1}{n},
    \end{align}
    where the last equality follows from assumptions (A1) and (A2). By Bienayme-Chebychev inequality, we conclude that $\frac{ \widehat{Y} - Y}{N}$ converges in probability to 0.
\end{proof}

\begin{lem}\label{lem:2}
   Suppose that assumptions (A2) and (A6) hold. Then
    \begin{align}
        \Var_I  \left( \frac{ \widehat{Y}_I - \widehat{Y}}{N} \right) = \O{1}{n}.
    \end{align}
\end{lem}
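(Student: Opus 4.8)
The plan is to reduce the claim to the closed-form expression for $\Var_I(\widehat{Y}_I)$ provided by assumption (A6), and then to bound each factor appearing in it using (A2) together with the defining constraints on $\psib^{\bknn}$. First I would observe that, conditionally on the sampling design and the nonresponse mechanism, the only randomness in $\widehat{Y}_I = \sum_{i\in S_r}d_iy_i + \sum_{j\in S_m}d_jy_j^*$ sits in the imputed values $y_j^* = \sum_{i\in S_r}\phi_{ij}^{\bknn}y_i$, whereas $\widehat{Y} = \sum_{i\in S}d_iy_i$ is fixed; hence $\widehat{Y}_I - \widehat{Y} = \sum_{j\in S_m}d_j(y_j^*-y_j)$ and
\begin{align*}
    \Var_I\left( \frac{\widehat{Y}_I - \widehat{Y}}{N} \right) = \frac{1}{N^2}\,\Var_I\left(\widehat{Y}_I\right).
\end{align*}

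Next I would apply (A6) to replace $\Var_I(\widehat{Y}_I)$ by $\sum_{i\in S_r}\sum_{j\in S_m,\,\psi_{ij}^{\bknn}\neq 0}c_{ij}d_j^2(y_i-\bb^\top\xb_i)^2$ and estimate the three ingredients separately. Since $0\le\psi_{ij}^{\bknn}\le 1$ and the factor $n_mk/(n_mk-q)$ is bounded for $n$ large (it tends to $1$), one has $c_{ij}\le C\psi_{ij}^{\bknn}$ for a constant $C$; by (A2), $d_j^2 = O(N^2/n^2)$; and $(y_i-\bb^\top\xb_i)^2\le M^2$, where $M^2 := \max_{i\in S_r}(y_i-\bb^\top\xb_i)^2$. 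Plugging these in and then using the column-sum identity $\sum_{i\in S_r}\psi_{ij}^{\bknn}=1$ gives
\begin{align*}
    \Var_I\left(\widehat{Y}_I\right) \le C\,\frac{N^2}{n^2}\sum_{j\in S_m}M^2\sum_{i\in S_r}\psi_{ij}^{\bknn} = C\,M^2\,\frac{N^2}{n^2}\,n_m = O\!\left( \frac{N^2}{n} \right),
\end{align*}
because $n_m \le n$; dividing by $N^2$ then gives the asserted $O(1/n)$.

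The step I expect to be the only real issue is the bound $M^2 = O(1)$, that is, the assertion that the fitted residuals $y_i - \bb^\top\xb_i$ stay bounded along the sequence of populations. I would justify it from (A4): writing $y_i - \bb^\top\xb_i = \varepsilon_i + (\betab-\bb)^\top\xb_i$, boundedness of the $\xb_i$ in the finite population and consistency of the weighted least squares coefficient $\bb$ for $\betab$ reduce this to boundedness of the $\varepsilon_i$, which holds in a finite population with bounded $y$ and $\xb$. If one wishes to avoid a uniform bound on individual residuals, the same conclusion follows by keeping $c_{ij}\le C\psi_{ij}^{\bknn}$, interchanging the order of summation, and controlling $\sum_{i\in S_r}(y_i-\bb^\top\xb_i)^2\big(\sum_{j\in S_m}\psi_{ij}^{\bknn}\big)$ through the facts that a weighted residual sum of squares is $O(n)$ under model $m$ and that no respondent is used as a donor too often. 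Apart from this, everything is routine bookkeeping within the $O(\cdot)$ conventions already in force in the paper.
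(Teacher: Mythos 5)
Your proof is correct and follows essentially the same route as the paper's: reduce to $\frac{1}{N^2}\Var_I^{app}(\widehat{Y}_I)$ via (A6), bound $c_{ij}$ by a constant times $\psi_{ij}^{\bknn}$, use (A2) for $d_j^2=O(N^2/n^2)$, and invoke boundedness of the residuals $y_i-\bb^\top\xb_i$. The only cosmetic difference is that you close the estimate with the exact column-sum constraint $\sum_{i\in S_r}\psi_{ij}^{\bknn}=1$ where the paper uses ``$k$ nonzero entries each of order $1/k$'' -- these are interchangeable -- and you are right that the residual bound $y_i-\bb^\top\xb_i=O(1)$ is the one unstated regularity condition; the paper simply asserts it without deriving it from (A2) and (A6).
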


\begin{proof}
 Assumption (A6) implies
    \begin{align}
        \Var_I \left( \frac{ \widehat{Y}_I - \widehat{Y}}{N} \right)
                &=  \frac{1}{N^2} \Var^{app}_I \left( \widehat{Y}_I \right)\\
                &=  \frac{1}{N^2} \sum_{i \in S_r} \sum_{ \substack{j \in S_m \\ \psi_{ij}^{\bknn} \neq 0}}
                                            c_{ij} d_j^2 \left( y_i - \bb^\top \xb_i   \right)^2\\
                &\leq \frac{1}{N^2} \sum_{j \in S_m} \sum_{ \substack{i \in S_r \\ \psi_{ij}^{\bknn} \neq 0}}
                                            \psi_{ij}^{\bknn} \frac{ n_m k } { n_m k - q} d_j^2 \left(y_i - \bb^\top \xb_i   \right)^2\\
                &= \O{1}{n}
    \end{align}
    where the last inequality comes from assumption (A2), from $y_i - \bb^\top \xb_i = O(1)$, and from $\psi_{ij}^{\bknn} = \O{1}{k}$.
\end{proof}

\begin{lem}\label{lem:3}
    Suppose that assumptions (A2), (A4), and (A7) hold. Then
    \begin{align}
        \Var_m \E_I  \left( \frac{ \widehat{Y}_I - \widehat{Y}}{N} \right) = \O{1}{n}.
    \end{align}
\end{lem}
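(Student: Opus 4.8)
The plan is to extract the model residuals from $\E_I(\widehat Y_I-\widehat Y)$, use the calibration property of $\psib^{\bknn}$ to eliminate the systematic part, and then bound the model variance of the two resulting error sums term by term, using assumptions (A2), (A4) and (A7).

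First I would write $\widehat Y_I-\widehat Y=\sum_{j\in S_m}d_j\left(y_j^*-y_j\right)$ and take the imputation expectation, which by~\eqref{equation:link:psi:phi} gives
\begin{align*}
    \E_I\left(\widehat Y_I-\widehat Y\right)=\sum_{j\in S_m}d_j\left(\sum_{i\in S_r}\psi_{ij}^{\bknn}y_i-y_j\right).
\end{align*}
Substituting the model of (A4), $y_i=\betab^\top\xb_i+\varepsilon_i$, the contribution of $\betab^\top\xb$ equals $\betab^\top\left(\sum_{j\in S_m}d_j\sum_{i\in S_r}\psi_{ij}^{\bknn}\xb_i-\sum_{j\in S_m}d_j\xb_j\right)$, which vanishes by the calibration constraint~\eqref{equation:constraint:psi:3} that $\psib^{\bknn}$ satisfies. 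Hence $\E_I(\widehat Y_I-\widehat Y)=A-B$, where $A=\sum_{i\in S_r}\left(\sum_{j\in S_m}d_j\psi_{ij}^{\bknn}\right)\varepsilon_i$ and $B=\sum_{j\in S_m}d_j\varepsilon_j$. The coefficients $d_j\psi_{ij}^{\bknn}$ are fixed given the design and the response set, and under (A4) the $\varepsilon$'s are uncorrelated with common variance $\sigma^2$; since moreover $S_r\cap S_m=\emptyset$, one has ${\rm Cov}_m(A,B)=0$ and therefore
\begin{align*}
    \Var_m\E_I\left(\frac{\widehat Y_I-\widehat Y}{N}\right)=\frac{\sigma^2}{N^2}\left[\sum_{i\in S_r}\left(\sum_{j\in S_m}d_j\psi_{ij}^{\bknn}\right)^2+\sum_{j\in S_m}d_j^2\right].
\end{align*}

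The second bracketed term is immediate: by (A2), $d_j=\O{N}{n}$ and $n_m\leq n$, so $\sum_{j\in S_m}d_j^2=\O{N^2}{n}$. For the first term I would expand the square into a diagonal part $\sum_{j\in S_m}d_j^2\sum_{i\in S_r}(\psi_{ij}^{\bknn})^2$ and an off-diagonal part $\sum_{j,\ell\in S_m,\,j\neq\ell}d_jd_\ell\sum_{i\in S_r}\psi_{ij}^{\bknn}\psi_{i\ell}^{\bknn}$. In the diagonal part, each column of $\psib^{\bknn}$ carries at most $k$ nonzero entries, each $\O{1}{k}$ (as already used for Lemma~\ref{lem:2}), so $\sum_{i\in S_r}(\psi_{ij}^{\bknn})^2=\O{1}{k}$ and this part is $\O{N^2}{nk}$. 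In the off-diagonal part, assumption (A7) caps the number of respondents that are simultaneously $k$-nearest neighbors of $j$ and of $\ell$ at $\O{k^2}{n_m}$, and each surviving product is $\O{1}{k^2}$, so $\sum_{i\in S_r}\psi_{ij}^{\bknn}\psi_{i\ell}^{\bknn}=\O{1}{n_m}$; multiplying by $d_jd_\ell=\O{N^2}{n^2}$ and summing over the $O(n_m^2)$ ordered pairs yields $\O{n_m N^2}{n^2}=\O{N^2}{n}$. Dividing through by $N^2$ gives the claimed $\Var_m\E_I\left(\frac{\widehat Y_I-\widehat Y}{N}\right)=\O{1}{n}$.

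The main obstacle is the off-diagonal double sum: without control on how much the donor neighborhoods of two distinct nonrespondents overlap, those cross terms need not be negligible, and it is precisely (A7) that forces the double sum down to the order $\O{N^2}{n}$. The only other point needing care is checking that the implicit $O(\cdot)$ constants are uniform along the sequence of populations -- in particular that the number of auxiliary variables $q$ is held fixed and that $\psi_{ij}^{\bknn}=\O{1}{k}$ throughout -- which is the same regularity already invoked in the proof of Lemma~\ref{lem:2}.
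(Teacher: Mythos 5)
Your proof is correct and follows essentially the same route as the paper: express $\E_I(\widehat Y_I-\widehat Y)$ via $\psib^{\bknn}$, reduce to $\frac{\sigma^2}{N^2}\bigl[\sum_{i\in S_r}(\sum_{j\in S_m}d_j\psi_{ij}^{\bknn})^2+\sum_{j\in S_m}d_j^2\bigr]$, expand the square into diagonal and off-diagonal parts, and bound them using (A2), (A7) and $\psi_{ij}^{\bknn}=\O{1}{k}$. Your explicit bounding of each piece just fills in details the paper leaves implicit (the appeal to the calibration constraint is harmless but unnecessary, since the systematic part is nonrandom under $m$ and drops out of the variance anyway).
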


\begin{proof}
    Using $\E_I \left( \psi_{ij}^{\bknn} \right) = \phi_{ij}^{\bknn}$ and assumption (A4), we get
    \begin{align}
        \E_I  \left( \frac{ \widehat{Y}_I - \widehat{Y}}{N} \right)
            &=  \frac{1}{N} \left( \sum_{j \in S_m}d_j \sum_{i \in S_r} \psi_{ij}^{\bknn} y_i  - \sum_{j \in S_m} d_j y_j\right)\\
            &=  \frac{1}{N} \left[ \sum_{j \in S_m}d_j \sum_{i \in S_r} \psi_{ij}^{\bknn} (\betab ^\top \xb_i + \varepsilon_i)  - \sum_{j \in S_m} d_j (\betab ^\top \xb_j + \varepsilon_j)\right].
    \end{align}
    Therefore, from assumption (A4) again, we obtain
    \begin{align}
        \Var_m \E_I  & \left( \frac{ \widehat{Y}_I - \widehat{Y}}{N} \right)\\
            &=  \frac{1}{N^2} \left[ \sum_{i \in S_r} \left(\sum_{j \in S_m} d_j  \psi_{ij}^{\bknn}\right)^2 \Var_m( \varepsilon_i )
                + \sum_{j \in S_m} d_j^2 \Var_m(\varepsilon_j)\right]\\
            &= \frac{1}{N^2} \sigma^2 \left[ \sum_{i \in S_r} \left(\sum_{j \in S_m} d_j  \psi_{ij}^{\bknn}\right)^2
                + \sum_{j \in S_m} d_j^2 \right] \\
            &= \frac{1}{N^2} \sigma^2 \left[
                \sum_{i \in S_r} \sum_{j \in S_m} d_j^2  {\psi_{ij}^{\bknn}}^2 +
                \sum_{i \in S_r} \sum_{j \in S_m} \sum_{\substack{\ell \in S_m \\ \ell \neq j}} d_j d_\ell \psi_{ij}^{\bknn}\psi_{i\ell}^{\bknn}
                + \sum_{j \in S_m} d_j^2 \right] \\
            &= \frac{1}{N^2} \sigma^2 \left[
                \sum_{j \in S_m} d_j^2 \sum_{i \in S_r} {\psi_{ij}^{\bknn}}^2 +
                \sum_{j \in S_m} \sum_{\substack{\ell \in S_m \\ \ell \neq j}} d_j d_\ell  \sum_{i \in S_r} \psi_{ij}^{\bknn}\psi_{i\ell}^{\bknn}
                + \sum_{j \in S_m} d_j^2 \right] \\
            &= \O{1}{n}
    \end{align}
    where the last equality follows from assumption (A2), from assumption (A7), and from $\psi_{ij}^{\bknn} = \O{1}{k}$.
\end{proof}

\begin{lem}\label{lem:4}
    Suppose that assumptions (A2) to (A7) hold. Then
    $\frac{ \widehat{Y}_I - \widehat{Y}}{N}$ converges in probability to 0.
\end{lem}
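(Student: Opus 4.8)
The plan is to reduce Lemma~\ref{lem:4} to the two preceding lemmas by a standard mean-square-error decomposition, working conditionally on the sample $S$ and the respondent set $S_r$ so that the only randomness left is that of the model $m$ and of the imputation mechanism $I$. Write $T_N = (\widehat{Y}_I - \widehat{Y})/N$. The law of total variance for the pair $(m,I)$ gives
\begin{align*}
    \E_m\E_I\left( T_N^2 \right) = \Var_m\E_I\left( T_N \right) + \E_m\Var_I\left( T_N \right) + \left( \E_m\E_I\left( T_N \right) \right)^2 ,
\end{align*}
so it is enough to bound each of the three terms on the right.

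First I would show that the last term vanishes. Exactly as in the proof of Property~\ref{thm:neighborhood:principle}, and using $\sum_{i \in S_r}\psi_{ij}^{\bknn} = 1$, one has $\E_I(\widehat{Y}_I - \widehat{Y}) = \sum_{j \in S_m} d_j \sum_{i \in S_r} \psi_{ij}^{\bknn}(y_i - y_j)$. Taking the model expectation and invoking (A4) — legitimate under (A3), so that $m$ still governs the respondents — replaces $y_i$ and $y_j$ by $\betab^\top\xb_i$ and $\betab^\top\xb_j$, and the calibration constraint~\eqref{equation:constraint:psi:3} satisfied by $\psib^{\bknn}$ (equivalently, the exact-balancing assumption (A5)) then forces $\E_m\E_I(\widehat{Y}_I - \widehat{Y}) = 0$, hence $\E_m\E_I(T_N) = 0$.

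For the two variance terms I would simply quote the earlier work. Lemma~\ref{lem:3} gives $\Var_m\E_I(T_N) = \O{1}{n}$ outright. Lemma~\ref{lem:2} gives $\Var_I(T_N) = \O{1}{n}$, and since the bound there rests only on (A2), (A6), $\psi_{ij}^{\bknn} = \O{1}{k}$ and $y_i - \bb^\top\xb_i = O(1)$ — none of which degrades after applying $\E_m$ — one also obtains $\E_m\Var_I(T_N) = \O{1}{n}$. Substituting the three estimates into the decomposition yields $\E_m\E_I(T_N^2) = \O{1}{n}$; because this bound is uniform over realizations of the design and of the nonresponse, it survives an additional $\E_p\E_q$, so $\E(T_N^2) = \O{1}{n} \to 0$, and Markov's inequality applied to $T_N^2$ delivers convergence of $T_N$ to $0$ in probability.

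I do not expect a genuine obstacle: the lemma is mostly bookkeeping once Lemmas~\ref{lem:2} and~\ref{lem:3} are in hand. The one point that needs care is insisting that $\E_m\E_I(T_N) = 0$ holds exactly rather than only approximately; this is precisely what assumption (A5) buys (it makes the imputed auxiliary totals coincide with $\widehat{\Xb}$), while (A3) is what lets the conditioning on $S_r$ leave the model expectation of the residuals untouched.
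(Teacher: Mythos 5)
Your proof is correct and follows essentially the same route as the paper: zero conditional mean of $(\widehat{Y}_I-\widehat{Y})/N$ via the argument in the proof of Property~\ref{thm:linear:model} under (A3)--(A4), the variance bound assembled from Lemmas~\ref{lem:2} and~\ref{lem:3}, and a Chebyshev/Markov conclusion. The only slip is your parenthetical claim that the calibration constraint~\eqref{equation:constraint:psi:3} on $\psib^{\bknn}$ is ``equivalently'' assumption (A5) --- (A5) concerns the realized matrix $\phib^{\bknn}$ matching $\psib^{\bknn}$, whereas the zero-mean step needs only the calibration property of $\psib^{\bknn}$ itself, which you also correctly invoke, so the argument stands.
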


\begin{proof}
    Assumption (A4) implies that (see proof of Property~\ref{thm:linear:model})
    \begin{align}\label{eq1:lem3}
        \E_m \E_I \left( \frac{ \widehat{Y}_I - \widehat{Y}}{N} \right) = 0.
    \end{align}
    Then, by assumption (A3), we get
    \begin{align}
        \E_{mpqI} \left( \frac{ \widehat{Y}_I - \widehat{Y}}{N} \right) = \E_p \E_q \E_m \E_I \left( \frac{ \widehat{Y}_I - \widehat{Y}}{N} \right) = 0.
    \end{align}
    Moreover, from Lemma~\ref{lem:2} and Lemma~\ref{lem:3}, we have
    \begin{align}\label{eq2:lem3}
        \Var_m \E_I \left( \frac{ \widehat{Y}_I - \widehat{Y}}{N} \right) + \E_m \Var_I \left( \frac{ \widehat{Y}_I - \widehat{Y}}{N} \right) = \O{1}{n}.
    \end{align}
     Assumption (A3), Equation~\eqref{eq1:lem3} and Equation~\eqref{eq2:lem3} together imply
    \begin{align}
       \Var_{mpqI} \left( \frac{ \widehat{Y}_I - \widehat{Y}}{N} \right) = \Var_{pqmI} \left( \frac{ \widehat{Y}_I - \widehat{Y}}{N} \right) = \O{1}{n}.
    \end{align}
    By Bienayme-Chebychev inequality, we conclude that $\frac{ \widehat{Y}_I - \widehat{Y}}{N}$ converges in probability to 0.
\end{proof}

\begin{proof}[Proof of the Proposition]~\\
    The conclusion follows directly from equality
    \begin{align}
        \frac{ \widehat{Y}_I - Y}{N} = \frac{ \widehat{Y}_I - \widehat{Y}}{N} + \frac{ \widehat{Y} - Y}{N},
    \end{align}
    Lemma~\ref{lem:1}, and Lemma~\ref{lem:4}.
\end{proof}

\end{document}